\newtheorem{thm}{Theorem}[section]
\newtheorem{cor}[thm]{Corollary}
\newtheorem{lem}[thm]{Lemma}
\newtheorem{prop}[thm]{Proposition}
\newtheorem{rem}[thm]{Remark}
\theoremstyle{definition}
\numberwithin{equation}{section}
\newcommand{\Real}{\mathbb R}
\newcommand{\spectrum}{{\sigma\,}}
\newcommand\Complex{\mathbb C}
\newcommand{\bbP}{\mathbb P}
\newcommand{\bbQ}{\mathbb Q}
\newcommand{\bbQP}{{\bbQ\times\bbP}}
\newcommand{\bbQQ}{{\bbQ\times\bbQ}}
\newcommand{\bbPP}{{\bbP\times\bbP}}
\newcommand{\LQP}{L_{\bbQ\times\bbP}}
\newcommand{\bbt}{{2}}
\newcommand{\Lt}{{L^\bbt}}
\newcommand{\LPt}{{L_\bbP^\bbt}}
\newcommand{\LQt}{{L_\bbQ^\bbt}}
\newcommand{\LQPt}{{L_{\bbQ\times\bbP}^\bbt}}
\newcommand{\bbU}{\mathcal{U}}
\newcommand{\bbV}{\mathcal{V}}
\newcommand{\bbq}{\mathfrak{q}}
\newcommand{\Lq}{L^\bbq}
\newcommand{\bbg}{\mathfrak{g}}
\newcommand{\bbf}{\mathfrak{f}}
\newcommand{\bbh}{\mathfrak{h}}
\newcommand{\cP}{\mathcal{P}}
\newcommand{\cQ}{\mathcal{Q}}
\newcommand{\cT}{\mathcal{T}}
\newcommand{\cS}{\mathcal{S}}
\newcommand{\cK}{\mathcal{K}}
\newcommand{\cD}{\mathcal{D}}
\newcommand{\cA}{\mathcal{A}}
\newcommand{\cH}{\mathcal{H}}
\newcommand{\cC}{\mathcal{C}}
\newcommand{\cTs}{{\cT^\dag}}
\newcommand\scpr[2]{\langle #1, #2 \rangle}
\newcommand\sinc{{\rm sinc}}
\title[Functional Analysis in Algorithms]{HMC, an  Algorithms in Data Mining, the Functional Analysis approach.}%
\author{Soumyadip Ghosh, Yingdong Lu, Tomasz Nowicki}
\address{  IBM T.J.~Watson Research Center, 1101 Kitchawan Road, Yorktown Heights, NY 10598, US}
\email{ghoshs, yingdong, tnowicki@us.ibm.com}
\begin{document}
\begin{abstract}
    The main purpose of this paper  is to facilitate the communication between the Analytic, Probabilistic and Algorithmic communities.
  We present a proof of convergence of the Hamiltonian (Hybrid) Monte Carlo algorithm from the point of view of the
  Dynamical Systems, where the evolving objects are densities of probability distributions and the tool are derived from the  Functional Analysis.
\end{abstract}
\maketitle
\section{Introduction}
\subsection*{Functional Analysis for all Functioning Algorithms}
We observed that all too often not only we do not speak a common language but also we do not see the reason to communicate and to see the problems through a different eye.
We tried, on the example an HMC algorithm, to gently (relatively speaking) build the bridges and make the "other" methods clearer and more comprehensible.
As the paper is addressed to people not necessarily fluent in Functional Analysis we make the proofs quite extended at some places and sometimes a bit hand-waving.
It is difficult to strike the balance between clarity and rigor.

\subsection*{Algorithms as source of inspiration}
Recent development and usage of Machine Learning (ML) Data Mining (DM) and Artificial Intelligence (AI)
resulted in a vast variety of new or refurbished algorithms to deal with large data sets, be it collected or streamed. Such new methods are usually tested on some data sets, however not very often they are thoroughly vetted by theoretical means. Algorithms tend to rely on discrete models, but the nature of data and approximation approaches suggest rather a continuous point of view. In our opinion one should go even further, the right objects of investigation of algorithms should
not be the continuous parameters but rather very general features of data such as distributions.

We perceive the algorithms as iterative transformations of the points in some underlying domains.
The leading idea is to move from a relatively simple objects such as finite sets or points in finite dimensional Euclidean spaces with quite complicated  transformations to
simple transformations in richer spaces such as distributions in functional spaces.

\subsection*{Hamiltonian Monte Carlo}
Or Hybrid Monte Carlo (HMC) algorithm is a method to obtain random samples from a
(target) probability distribution $\bbf/\int_\bbQ\bbf$ on the space $\bbQ$ whose density is known only up to a factor,
that is to say that $\bbf$ is known, but $\int_\bbQ\bbf$ is not, or at least is very difficult to calculate.
It is an algorithms known for a while~\cite{1} of the Metropolis-Hastings type used to estimate the integrals.
There are known proofs of convergence~\cite{2}.
Our goal is to provide a clear and understandable reason why HMC algorithm converges to the right limit for densities in the spaces $\Lt(\bbQ)$.
We refer to our papers~\cite{3,4,5} for other approaches to HMC, probabilistic, algorithmic and analytic in $\Lq$.
Here we concentrate on the convergence in $\Lt$.

HMC performs by iterating the following steps. Given an initial distribution $h$ (sample points) in a given space $\bbQ$ double (the dimension of) the space by considering $\bbQP$, with $\bbP\sim\bbQ$. Then spread each point $q\in\bbQ$ to a point $(q,p)\in\bbQP$
by sampling $p\in\bbP$ from a distribution of choice $\bbg$, where $\bbg>0$ on $\bbP$ and $\int_\bbP\bbg=1$. Then move each point $(q,p)$ to a new point $(Q,P)=H(q,p)$, where the transformation $H:\bbQP\to\bbQP$ satisfies some special invariance properties with respect to $\bbf$ and $\bbg$ and the underlying base measure on~$\bbQP$. Finally project $(Q,P)$ on $\bbQ$ providing a new sample of points $Q$ in $\bbQ$ with a new distribution $\hat{h}$ which shall be used as the initial sample (or distribution) for the next step.
With the right choice of $H$ the iteration of this procedure will result in an approximate sample from the target distribution.

\subsection*{Moving to functional spaces}
The success of the algorithm lies in the appropriate choice of the transformation $H$.
The Hamiltonian part of the algorithm's name is due to the Hamiltonian motion $H$. It turns out that  if the target distribution has a density proportional to a given function $\bbf$ and the distribution of choice is $\bbg$ then $H$ is a Hamiltonian motion generated by the Hamiltonian energy $\cH(q,p)=-\log(\bbf(q)\cdot\bbg(p))$. That is if $(Q,P)=H(q,p)$ is the solution of the time evolution
$\dot{Q}=\partial\cH/\partial P$, $\dot{P}=-\partial\cH/\partial Q$ after time $t$ with initial point $(q,p)$ then $H$ has the needed invariance properties for HMC to converge to a distribution proportional to $\bbf$. Effectively it means that one can obtain a normalizing constant $\int_\bbQ \bbf$ or any expected value of a function $\phi$ with respect to the distribution proportional to $\bbf$: $\int\phi\cdot\bbf/\int\bbf$.

In terms of the densities of the involved distributions one can present HMC as follows: Given some initial distribution $h(q)$ on $\bbQ$ one produces a joint distribution $h\cdot \bbg$ by $(h\cdot\bbg)(q,p)=h(q)\cdot \bbg(p)$ on $\bbQP$ then moves the points $(q,p)\mapsto (Q,P)=H(q,p)$ producing another distribution $(h\cdot\bbg)\circ H(q,p)=h(Q)\cdot(P)$ in $\bbQP$ and finally projects the last one onto $\bbQ$
by calculating the marginal $\int_\bbP (h\cdot\bbg)\circ H(q,p)\,dp$, which is a result of the action of the algorithm in one step. In short
\begin{equation}
  \label{eqdef:cT}
  \cT(h)(q)=\int_\bbP (h\cdot \bbg)\circ H (q,p)\,dp
\end{equation}
  and from a rather complicated algorithm we receive a relatively simple, linear operator on some space of integrable functions. The convergence of the algorithm  corresponds to the convergence of sequences of iterates of $\cT$.

\begin{rem}\
\begin{enumerate}
\item  The distribution $\bbg$ may depend on the point $q$, $\bbg(p|q)$ as long as for (almost) all $q$ it satisfies the required conditions.
\item It is clear that the Hamiltonian motion $H$ does not depend on the constant normalizing factor in front of $\bbf$
\item The motion $H$ in practical implementation is performed by the \emph{leap-frog algorithm} which displays the needed invariance properties. We shall not deal with it in this paper.
\item An example of the situation where the target distribution is known up to the normalizing constant is the Bayesian update. In order to establish the distribution of (random)
    parameters $\theta$ influencing the outcome $D$ of the observations, when we know all the probabilities $P(D|\theta)$ one uses the knowledge of the outcome $D$ to improve the estimate:
    Given the estimate distribution $\pi(\theta)$ before the experiment we calculate $\hat{\pi}(\theta)=
    P_\pi(\theta|D)=P_\pi(\theta, D)/P_\pi(D)=P(D|\theta)\cdot\pi(\theta)/\sum_{\theta'} P(D|\theta')\cdot\pi(\theta')$ and take the distribution $\hat{\pi}$ as a new estimate. However the sum (integral) in the denominator may be not that easy to calculate. This yields to $\hat{\pi}(\theta)\sim P(D|\theta)\pi(\theta)$ without the normalizing factor.
\end{enumerate}
\end{rem}

\section{Results}
\subsection*{Convergence under invariance and coverage properties}
Assume that the motion $H:\bbQP\to\bbQP$ (measurable spaces with measures $dq$ and $dp$) satisfies the following \emph{invariance and coverage properties} when \\
Given $0\le\bbf:\bbQ\to\Real$\quad $\int_\bbQ\bbf<\infty$, $0\le \bbg:\bbP\to\Real$\quad $\int_\bbP\bbg=1$:
\begin{eqnarray}
(\bbf\cdot\bbg)\circ H=\bbf\cdot\bbg &&\label{ass:inv fg}\\
 \iint_\bbQP A\circ H =\iint_\bbQP A && \text{ for any integrable } A \label{ass:inv A}\\
 Q(q,\bbP)=\bbQ&&\text{ for (almost) every }q\label{ass:Q onto Q}
\end{eqnarray}
The space $\bbQ$ may be restricted to equal the support of $\bbf$.

Define  $\cT h=\int_\bbP (h\cdot\bbg)\circ H$  as in \eqref{eqdef:cT} and $\cT^{n+1}=\cT^n\circ\cT$.
 The \emph{adjoint} operator $\cTs$ is given by the same formula with $H^{-1}$ in place of $H$ and is described below in Section~\ref{sec:adjoint}.
 A \emph{self-adjoint} operator satisfies $\cTs=\cT$, see~\eqref{eqn:selfadjoint}.

Let $\Lt$ denotes the space of square-integrable functions $h:\bbQ\to\Real$ such that $||h||_\bbt^\bbt=\int_\bbQ |h|^\bbt/\bbf<\infty$ and the support of $h$ is included in the support of $\bbf$ (\emph{i.e.}~$\bbQ$).
\begin{thm}\label{thm:strong conv}
   Under the above invariance and coverage conditions and when the operator $\cT$ is self-adjoint then for every $h\in\Lt(\bbQ)$ the sequence $\cT^n h$ converges strongly in $\Lt$ to
   $\bbf \cdot \int h/\int\bbf$. The direction of $\bbf$ is the unique direction of fixed points. Except of the eigenvalue 1 with multiplicity 1,  all the spectrum is contained in the interior of the unit disc.
 \end{thm}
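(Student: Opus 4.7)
The plan is to realize $\cT$ as a bounded self-adjoint contraction on the Hilbert space $\Lt(\bbQ)$ with inner product $\scpr{h_1}{h_2} = \int_\bbQ h_1 h_2/\bbf\,dq$, and then apply the spectral theorem to $\cT^n$ after pinning down the peripheral spectrum. The change of variable $h = u\bbf$ identifies $\Lt(\bbQ)$ isometrically with $L^2(\bbQ,\bbf\,dq)$. Using the invariance $(\bbf\bbg)\circ H = \bbf\bbg$ to rewrite $\bbg(P(q,p))\bbf(q)/\bbf(Q(q,p))$ as $\bbg(p)$, the operator becomes the averaging operator
\begin{equation*}
(Su)(q) = \int_\bbP u\bigl(Q(q,p)\bigr)\,\bbg(p)\,dp, \qquad \cT(u\bbf) = \bbf\cdot Su.
\end{equation*}
Since $\bbg(p)\,dp$ is a probability measure, Jensen's inequality gives $|Su(q)|^2 \le \int |u(Q(q,p))|^2\,\bbg(p)\,dp$ pointwise; integrating against $\bbf\,dq$ and combining the measure-preservation $\iint A\circ H = \iint A$ with the joint invariance collapses the right side to $\int |u|^2\bbf\,dq$, so $S$ and hence $\cT$ is a contraction. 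Since $\cT$ is assumed self-adjoint, $\spectrum(\cT)\subset[-1,1]$.

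I would then locate the peripheral eigenvalues. Substituting $u\equiv 1$ gives $\cT\bbf = \bbf$, so $1$ is an eigenvalue. Conversely, $Su = u$ forces equality throughout the Jensen chain, so $p\mapsto u(Q(q,p))$ must be $\bbg$-a.e.\ constant in $p$ for almost every $q$; the coverage assumption $Q(q,\bbP)=\bbQ$ then propagates this constancy across $\bbQ$, so $u$ is essentially constant and the $+1$ eigenspace is the line spanned by $\bbf$. The same Jensen rigidity applied to $Su = -u$ gives $u(Q(q,p)) = -u(q)$ together with $p$-independence of the left side, so coverage again makes $u$ essentially constant; the sign flip $u = -u$ then forces $u = 0$, ruling out $-1$ as an eigenvalue. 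Upgrading this to $-1 \notin \spectrum(\cT)$ and to the full gap statement $\spectrum(\cT)\setminus\{1\}\subset\{|z|<1\}$ would go through Weyl's criterion applied to approximate eigenvectors, using the same rigidity in a quantitative form.

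Finally the spectral theorem for bounded self-adjoint operators gives $\cT^n h = \int_{[-1,1]} \lambda^n\,dE_h(\lambda)$; because the spectral measure of any $h$ has no atom at $-1$ and $\lambda^n \to 0$ on $(-1,1)$ while $|\lambda^n|\le 1$, dominated convergence inside the spectral integral yields strong convergence $\cT^n h \to E(\{1\})h$ in $\Lt$. The orthogonal projection onto $\Complex\cdot\bbf$ in the weighted inner product evaluates to $E(\{1\})h = \bbf\cdot\scpr{h}{\bbf}/\scpr{\bbf}{\bbf} = \bbf\cdot\int h/\int\bbf$, which is the limit claimed. The principal obstacle will be the step from ``$p\mapsto u(Q(q,p))$ is essentially constant in $p$'' to ``$u$ is essentially constant on $\bbQ$'': making this rigorous requires the coverage condition to be compatible with Fubini's theorem, so that the $p$-null exceptional sets can be assembled across $q$ to genuinely cover $\bbQ$. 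A secondary subtlety is that pointwise ruling out $-1$ as an eigenvalue is strictly weaker than the spectral-gap statement, which likely requires either a quantitative strengthening of coverage or some form of (quasi-)compactness of $\cT$ and should be treated as a separate step after the strong-convergence proof.
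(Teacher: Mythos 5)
Your proposal is correct in substance, but it travels a genuinely different road from the paper. The paper never invokes the spectral theorem: it establishes that $\cT$ is a nonexpansive averaging operator (Lemma~\ref{lem:cT prop}), uses reflexivity of $\Lt$ to extract weak accumulation points of $\cT^n h$, exploits self-adjointness through the identity $\|\cT^m h'\|^2 = \scpr{\cT^{2m}h'}{h'} \to \scpr{h_\infty}{h'}$ to show that any weak accumulation point $h_\infty$ attains the limiting norm $\lim\|\cT^n h\|$ (Lemma~\ref{lem:norm of hinf}, Corollary~\ref{cor:norm of hinf}), upgrades weak to strong convergence of the subsequence via uniform convexity, and then applies the equality case of the Jensen/H\"older estimate together with the coverage assumption to pin down $h_\infty = \alpha\bbf$; coincidence of all subsequential limits then forces convergence of the whole sequence (Proposition~\ref{prop:norm of hinf}). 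Your version instead passes to $L^2(\bbQ,\bbf\,dq)$, treats $\cT$ as a self-adjoint contraction, identifies the eigenspace at $+1$ and rules out $-1$ as an eigenvalue using \emph{the same} Jensen-rigidity-plus-coverage argument, and finishes with the spectral calculus $\cT^n h = \int \lambda^n\,dE_h(\lambda)$ plus dominated convergence against the scalar measure $\mu_h(\cdot) = \scpr{E(\cdot)h}{h}$. The two proofs share the crucial rigidity lemma; they differ in the soft-analysis scaffolding around it. The spectral-theorem route is shorter and more standard for Hilbert spaces, and you correctly observe that strong convergence requires only $\mu_h(\{-1\})=0$, not a full spectral gap. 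The paper's route is more elementary (no functional calculus) and, as the companion references suggest, is built to transfer to $L^p$ settings where the spectral theorem is unavailable. You are also right to flag that the final spectral-gap clause of the theorem (that $\spectrum(\cT)\setminus\{1\}$ lies inside the open disc) does not follow from either argument: the paper's own Remark~\ref{rem:cT spec} concedes that isolating $1$ in the spectrum requires the compactness hypotheses of Theorem~\ref{thm:Hamil  geom conv}, so at the level of Theorem~\ref{thm:strong conv} that clause is really only established in the weaker form that $1$ is the unique eigenvalue on the unit circle.
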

\begin{rem}
\
\begin{enumerate}
\item The Hamiltonian motion satisfies two first integral invariance assumptions, as both the Hamiltonian (energy function equal here $-\log(\bbf(q)\cdot\bbg(p))$) and the Lebesgue measure are invariant under such a motion.
\item
    The coverage  property $Q(q,\bbP)=\bbQ$ can be weakened to a statement of an eventual coverage, not necessarily in one step. Some type of irreducibility must be assumed to avoid complete disjoint domains of the motion and hence an obvious non existence of a (unique) limit.
\item The support condition takes care of some initialization problems with the division by 0. This can be formally avoided by working in the space of likelihoods, see below.
\item
    The self-adjointness condition is not very restrictive, as one can always use the composition $\cTs\circ \cT $ which is self-adjoint and satisfy all the needed properties. Also in case of any even auxiliary distribution $\bbg(p)=\bbg(-p)$ on $\Real^d$, such as standard Gaussian,  the operator is always self-adjoint, see Lemma~\ref{lem:sigma}.
\end{enumerate}
\end{rem}

\subsection*{Exponential convergence under uniformly strong logarithmic concavity}
We say that $\bbh:\Real^d\to\Real$ is \emph{uniformly strongly logarithmic concave} if for almost every $Q$ the (symmetric)  Hessian $-\partial^2\log(\bbh(Q))/\partial Q^2$
 has its spectrum contained in some finite, positive interval $\{z: 0<\lambda\le \Lambda<\infty\}$ independent on $Q$.
 Gaussian auxiliary distributions are obviously uniformly strongly concave.
 Speaking informally one may say that the density lies between two Gaussians.

 For the next Theorem we need a \emph{stronger coverage condition}. We say that the motion $H:(q,p)\mapsto(Q,P)$ is fully invertible if given fixed values of (almost) any two of the four  variables $(q,p,Q,P)$ the other two are connected by a differentiable bijection. In particular given $q$ the derivative $\partial Q/\partial p$ of the map $p\mapsto\cQ_q(p)=Q$
 is smoothly invertible and the same holds for  $q\mapsto\cP_p(q)=P$.
\begin{thm}
  \label{thm:Hamil  geom conv}
  Suppose that the target distribution $\bbf:\Real^d\to\Real$ and the auxiliary distribution $\bbg:\Real^d\to\Real$ are both uniformly strictly logarithmic concave.
  Let $H_t$ be the Hamiltonian motion defined by the Hamiltonian $\cH(Q,P)=-\log(\bbf(Q)\cdot\bbg(P))$ and $\cT$ the operator defined by $H_t$. Let assume it is self-adjoint.
  Then for $t>0$  small enough
  the iterations of the operator   $\cT_t$  are converging geometrically to the map $h\mapsto \int h\cdot \bbf/\int \bbf$.
  \[
  \exists(0<\rho<1)\,\forall(h\in\Lt(\bbQ))\,\forall(n) \quad ||\cT^n h-\bbf\frac{\int h}{\int\bbf}||_2\le \rho^n ||h||_2\,.
  \]
\end{thm}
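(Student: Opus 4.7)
The plan is to recast the statement as a quantitative spectral-gap estimate. Using $(\bbf\cdot\bbg)\circ H=\bbf\cdot\bbg$ and the substitution $\phi:=h/\bbf$, the operator $\cT_t$ factors as $(\cT_t h)(q)=\bbf(q)(\widetilde{\cT}_t\phi)(q)$, where
\[
   (\widetilde{\cT}_t\phi)(q) \;=\; \int_\bbP \phi\!\bigl(Q(t,q,p)\bigr)\,\bbg(p)\,dp
\]
is a Markov operator on $L^2(\bbf\,dq)$. Since $h\mapsto h/\bbf$ is an isometry $\LPt\to L^2(\bbf)$ sending the fixed direction $\bbf$ to constants, Theorem~\ref{thm:strong conv} supplies that $\widetilde{\cT}_t$ is self-adjoint, a contraction, and has $1$ as a simple eigenvalue on the constants. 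The goal is to exhibit $\rho<1$ with $\|\widetilde{\cT}_t|_{\mathbf{1}^\perp}\|\le\rho$; then $\|\cT_t^n h - \bbf(\int h/\int\bbf)\|_{\bbt} = \|\widetilde{\cT}_t^n(\phi-\bar\phi)\|_{L^2(\bbf)} \le \rho^n\|h\|_{\bbt}$.

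\textbf{Small-$t$ expansion and identification of the generator.} Writing $V=-\log\bbf$, $W=-\log\bbg$, the Hamilton equations give
\[
   Q(t,q,p) \;=\; q + t\,\nabla W(p) - \tfrac{t^2}{2}\,\nabla^2 W(p)\,\nabla V(q) + O(t^3).
\]
The $O(t)$ contribution to $\widetilde{\cT}_t\phi$ vanishes because $\int\nabla W(p)\,\bbg(p)\,dp=-\int\nabla\bbg\,dp=0$ (equivalently, by the self-adjointness Lemma~\ref{lem:sigma}). A Stein-type integration by parts identifies $E_\bbg[\nabla W\otimes\nabla W]=E_\bbg[\nabla^2 W]=:M$, and the $O(t^2)$ terms reassemble into the weighted Fokker--Planck operator
\[
   \widetilde{\cT}_t\phi \;=\; \phi \;+\; \tfrac{t^2}{2}\,\cL_M\phi \;+\; O(t^3),
   \qquad
   \cL_M\phi \;:=\; \tfrac{1}{\bbf}\,\mathrm{div}\bigl(\bbf\,M\,\nabla\phi\bigr).
\]
Integration by parts in $L^2(\bbf)$ gives the symmetric Dirichlet form $\langle-\cL_M\phi,\phi\rangle=\int M\nabla\phi\cdot\nabla\phi\,\bbf\,dq\ge 0$.

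\textbf{Bakry--\'Emery closes the gap.} The two strong log-concavity hypotheses now combine cleanly. Uniform strong log-concavity of $\bbg$ ($\nabla^2 W\ge\mu I$) forces $M\ge\mu I$, so the Dirichlet form dominates the standard one. Uniform strong log-concavity of $\bbf$ ($\nabla^2 V\ge\lambda I$) is the Bakry--\'Emery $CD(\lambda,\infty)$ condition for the measure $\bbf$, which yields the Poincar\'e inequality $\mathrm{Var}_\bbf(\phi)\le\lambda^{-1}\int|\nabla\phi|^2\bbf\,dq$. Chaining these, for every $\phi$ with $\int\phi\bbf=0$,
\[
   \langle -\cL_M\phi,\phi\rangle \;\ge\; \mu\!\int|\nabla\phi|^2\bbf\,dq \;\ge\; \mu\lambda\,\|\phi\|^2_{L^2(\bbf)}.
\]

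\textbf{From the generator gap to geometric contraction; the main obstacle.} Because $\widetilde{\cT}_t$ is self-adjoint, $\|\widetilde{\cT}_t|_{\mathbf{1}^\perp}\|^2$ equals the spectral radius of $\widetilde{\cT}_t^{\,2}$ there; working with the nonnegative $\widetilde{\cT}_t^{\,2}=\widetilde{\cT}_t^{\,*}\widetilde{\cT}_t$ sidesteps any worry about a negative part of the spectrum. The expansion above gives $\|\widetilde{\cT}_t\phi\|^2=\|\phi\|^2+t^2\langle\cL_M\phi,\phi\rangle+O(t^3)$, so the Bakry--\'Emery bound delivers
\[
   \|\widetilde{\cT}_t\phi\|^2 \;\le\; \bigl(1-t^2\mu\lambda+O(t^3)\bigr)\|\phi\|^2,\qquad \phi\perp\mathbf{1},
\]
and for $t$ small we may take $\rho:=\sqrt{1-t^2\mu\lambda+O(t^3)}<1$. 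Iterating and translating back through the isometry $h=\bbf\phi$ yields the theorem. The genuinely delicate point, and the main obstacle I expect in a rigorous write-up, is that the Taylor expansion is only \emph{pointwise} valid on smooth $\phi$ with controlled derivatives, whereas the quadratic-form inequality is needed on all of $L^2(\bbf)$. I would handle this by approximating $\phi$ by $C_c^\infty$ test functions (dense in $L^2(\bbf)$ because strong log-concavity forces $\bbf>0$ to be smooth), bounding the $O(t^3)$ remainder in terms of weighted Sobolev norms dominated by $\|\phi\|_{L^2(\bbf)}+\langle-\cL_M\phi,\phi\rangle$, and passing to the limit using the uniform bound $\|\widetilde{\cT}_t\|\le 1$ furnished by Theorem~\ref{thm:strong conv}.
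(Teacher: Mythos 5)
Your approach is genuinely different from the paper's, and it is also incomplete. The paper's proof (Section~\ref{subsec:Proof Thm geom conv}) never Taylor-expands in $t$: it writes $\cT_t$ as an integral operator with kernel $K(q,Q)=\bbf(Q)\bbg(\cP_q(Q))\cD_q(Q)$, solves the linearized Hamiltonian flow $\partial(Q,P)/\partial(q,p)$ in closed matrix-exponential form (Lemma~\ref{lem:exp B}, Proposition~\ref{prop:d QP d qp}), and uses uniform strong log-concavity only to bound the Jacobian factors $\cD_q(Q)\cD_p(P)$ away from $0$ and $\infty$ for $0<t\Lambda<\pi/2$. That makes $K$ square-integrable, hence $\cT_t$ compact, and compactness together with the multiplicity-one isolated eigenvalue $1$ from Theorem~\ref{thm:strong conv} furnishes the spectral gap with no quantitative remainder control. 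Your route would, if completed, be more informative (an explicit rate $\rho\approx\sqrt{1-t^2\mu\lambda}$), and the second-order identification $\widetilde{\cT}_t=I+\tfrac{t^2}{2}\cL_M+O(t^3)$ with $\cL_M\phi=-\nabla V\cdot M\nabla\phi+\mathrm{tr}(M\nabla^2\phi)$ and the Stein identity $\int\nabla W\,\nabla W^{\top}\bbg=\int\nabla^2W\,\bbg$ are correct.

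The gap is the one you flag as the ``genuinely delicate point,'' but your sketched repair does not close it. The order-$t^3$ remainder of $\phi(Q(t,q,p))$ after integrating in $p$ involves $\nabla^2\phi$ (from mixing the $O(t)$ and $O(t^2)$ displacements) and $\nabla^3\phi$ (from the cube of the $O(t)$ displacement); after the $p$-average those contributions do not vanish. Consequently the error in $\|\widetilde{\cT}_t\phi\|^2=\|\phi\|^2+t^2\langle\cL_M\phi,\phi\rangle+O(t^3)$ is bounded only by second- and third-order weighted Sobolev norms of $\phi$, and definitely \emph{not} by $\|\phi\|_{L^2(\bbf)}+\langle-\cL_M\phi,\phi\rangle$, which controls only $\nabla\phi$. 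For a fixed small $t>0$ you therefore cannot conclude a uniform bound $\|\widetilde{\cT}_t\phi\|^2\le(1-\tfrac{1}{2}t^2\mu\lambda)\|\phi\|^2$ on a dense subspace of $\mathbf{1}^\perp$; the density-and-boundedness extension that would finish the argument never gets started. To make your route rigorous you would need a smoothing estimate for $\widetilde{\cT}_t$ trading powers of $t$ against derivative loss (nontrivial, since the averaging is over $p$ rather than $q$), or you would need independent localization of the essential spectrum --- which is precisely what the paper's compactness-of-the-kernel argument supplies, bypassing the quantitative Taylor control entirely.
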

One can prove that the uniformly strong logarithmic concavity assumption is needed only outside an arbitrary bounded region, see~\cite{4}.

\subsection*{The proofs are not very hard}
For Theorem~\ref{thm:strong conv} we observe that $\cT$ is in fact~\eqref{lem:cT prop alt} an averaging map, thus by the convexity of $x\mapsto x^\bbt$ the norm of $h$ decreases~\eqref{lem:cT prop norm} under $\cT$, sharply unless (by coverage assumption) $h=\alpha\bbf$. The space $\Lt$ is reflexive, thus  bounded sequences have weak accumulation points. Using self-adjointness defined in Section~\ref{sec:adjoint} we prove that each accumulation point must be of form $\alpha\bbf$, proving weak convergence. Meanwhile the proof of the convergence of the norms provides (Proposition~\ref{prop:norm of hinf}) strong convergence. The spectral properties follow from Remark~\ref{rem:cT spec}.

The proof of Theorem~\ref{thm:Hamil  geom conv}  relies on the representation of the operator $\cT$ as a kernel operator~\eqref{eqn:ct kernel} and the proof in Subsection~\ref{subsec:Proof Thm geom conv}
(somewhat lengthy in calculation but not too deep) that the $\Lt$ norm of that kernel~\eqref{eqn: norm of K} is finite,
hence the kernel is compact and the operator $\cT$ has the spectral gap which provides the geometric (or exponential) rate of convergence.

\section{The operator $\cT$ in $\Lt$}\label{sec: operator cT}
\subsection*{The Hilbert space $\Lt$}
For $h\in\Lt$ we have a standard norm $||h||_\bbt^\bbt=\int_\bbQ \left(\frac{h}{\bbf}\right)^\bbt \bbf$
and a scalar  product $\scpr{\cdot}{\cdot}:\Lt\times\Lt\to\Real:\scpr{a}{b}=\int_\bbQ\frac{a\cdot b}{\bbf}$.

We shall assume $h\ge 0$ unless stated otherwise.
Call $h/\bbf$ a \emph{likelihood} (up to an irrelevant  normalizing constant $\int\bbf$) of $h$ with respect to~$\bbf$.   The space of  $\tilde{\Lt}=\{\tilde{h}:\int_\bbQ |\tilde{h}|^\bbt \bbf<\infty\}$ of likelihoods $\tilde{h}=h/\bbf$ is isometric to $\Lt$.
\begin{lem}\label{lem:prop of nbc}
\begin{eqnarray}
\label{lem:cT prop sc}
a, b\in \Lt\Rightarrow\quad
\scpr{a}{b}&\le& ||a||_\bbt\cdot||b||_\bbt,\\
\label{lem:cT prop f}
\text{we have}\quad \bbf\in\Lt\text{ with }\quad ||\bbf||_\bbt^\bbt&=&\int_\bbQ \bbf,
\\
\label{lem:cT prop int f}
h\in\Lt\Rightarrow\quad\scpr{h}{\bbf}&=&\int_\bbQ h\quad
\,.
\end{eqnarray}
\end{lem}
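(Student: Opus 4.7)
The plan is to unfold definitions: each of \eqref{lem:cT prop sc}, \eqref{lem:cT prop f}, \eqref{lem:cT prop int f} reduces to an elementary computation once one recognizes that the bilinear form $\scpr{a}{b}=\int_\bbQ ab/\bbf$ makes $\Lt$ a weighted $L^2$-space. Via the pointwise identification $h\mapsto h/\sqrt{\bbf}$, the space $\Lt$ is isometric to ordinary $L^2(\bbQ,dq)$, and all three assertions are images of standard facts under this identification.

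For \eqref{lem:cT prop sc} I would factor the integrand as $ab/\bbf=(a/\sqrt{\bbf})\cdot(b/\sqrt{\bbf})$ and apply the classical Cauchy--Schwarz inequality in $L^2(\bbQ,dq)$, giving $\scpr{a}{b}\le\bigl(\int_\bbQ a^2/\bbf\bigr)^{1/2}\bigl(\int_\bbQ b^2/\bbf\bigr)^{1/2}=||a||_\bbt\cdot||b||_\bbt$. Alternatively one simply invokes the generic Cauchy--Schwarz inequality for the Hilbert inner product $\scpr{\cdot}{\cdot}$, since symmetry, bilinearity, and positive-definiteness of that form are immediate from its definition (positive-definiteness uses $\bbf>0$ on its support together with the support convention on $\Lt$).

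For \eqref{lem:cT prop f} a direct calculation yields $||\bbf||_\bbt^\bbt=\int_\bbQ \bbf^2/\bbf=\int_\bbQ\bbf$, finite by the standing hypothesis $\int_\bbQ\bbf<\infty$; this simultaneously shows $\bbf\in\Lt$ and evaluates its norm. For \eqref{lem:cT prop int f} the analogous cancellation gives $\scpr{h}{\bbf}=\int_\bbQ h\bbf/\bbf=\int_\bbQ h$. The only subtlety is that the integrand $h\bbf/\bbf$ must be well defined, which is ensured by the convention $\supp h\subseteq\supp\bbf=\bbQ$; outside this support both $h$ and $\bbf$ vanish and the integrand is declared $0$.

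There is no real obstacle here — the lemma is a consistency check that the definitions fit together, and the only thing one has to be careful about is the $0/0$ issue on the complement of $\supp\bbf$, already addressed by the support convention. Its role is conceptual rather than technical: \eqref{lem:cT prop int f} identifies the linear functional $h\mapsto\int h$ with the inner product against $\bbf$, so that the target limit $\bbf\cdot\int h/\int\bbf$ in Theorem~\ref{thm:strong conv} is precisely the orthogonal projection of $h$ onto the line $\Real\cdot\bbf\subset\Lt$. This reformulation will be essential once self-adjointness of $\cT$ is brought in.
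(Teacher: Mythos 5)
Your proof is correct and follows essentially the same route as the paper: the paper phrases \eqref{lem:cT prop sc} as H\"older's inequality for the likelihoods $a/\bbf$, $b/\bbf$ in $\tilde{\Lt}$, which is exactly your Cauchy--Schwarz computation after the weight change, and it likewise disposes of \eqref{lem:cT prop f} and \eqref{lem:cT prop int f} by direct substitution into the definitions.
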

\begin{proof}
 \eqref{lem:cT prop sc} is the H\"older inequality for likelihoods $a/\bbf$ and $b/\bbf$ in the space $\tilde{\Lt}$.\\
\eqref{lem:cT prop f} and \eqref{lem:cT prop int f} follow directly from the definitions.
\end{proof}
\subsection*{Operator $\cT$}
\begin{lem}[Properties of $\cT$]\label{lem:cT prop}
For $0\le h\in\Lt$:
\begin{eqnarray}
\label{lem:cT prop alt}
\cT h &=&  \bbf\cdot\int_\bbP \frac{h}{\bbf}\circ H \cdot\bbg
\\
\label{lem:cT prop int}
\int_\bbQ\cT h &=&  \int_\bbQ h\\
\label{lem:cT prop norm}
||\cT h||_\bbt &\le& ||h||_\bbt \\
\label{lem:cT prop eq}
\text{The equality in \eqref{lem:cT prop norm} occurs}&\text{iff}& h=\alpha\cdot\bbf \text{ a.e.}\,,
\end{eqnarray}
where $\alpha=\alpha(h)=\int h/\int \bbf$.
\end{lem}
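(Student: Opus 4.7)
The plan is to handle the four claims in order, with the first being an algebraic rewriting that powers the rest.

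For \eqref{lem:cT prop alt} I would start from the definition $\cT h = \int_\bbP (h\cdot\bbg)\circ H$ and use the invariance $(\bbf\cdot\bbg)\circ H = \bbf\cdot\bbg$ from \eqref{ass:inv fg} to rewrite $(h\bbg)\circ H$ pointwise as $((h/\bbf)\circ H)\cdot((\bbf\bbg)\circ H) = ((h/\bbf)\circ H)\cdot\bbf(q)\bbg(p)$. Pulling $\bbf(q)$ outside the integral over $p$ gives the claimed formula and exhibits $\cT h/\bbf$ as the average of $(h/\bbf)\circ H$ against the probability measure $\bbg(p)\,dp$. For \eqref{lem:cT prop int}, I integrate the defining formula over $\bbQ$, apply the integral invariance \eqref{ass:inv A} to $(h\bbg)\circ H$ on $\bbQP$, and use $\int_\bbP\bbg=1$ together with Fubini to get $\int_\bbQ\cT h=\int_\bbQ h$.

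The norm bound \eqref{lem:cT prop norm} is then a Jensen estimate built on \eqref{lem:cT prop alt}. Applying the convex function $x\mapsto x^\bbt$ against the probability measure $\bbg(p)\,dp$ gives the pointwise inequality
\[
\left(\frac{\cT h}{\bbf}(q)\right)^{\!\bbt} \le \int_\bbP \left(\frac{h}{\bbf}\circ H\right)^{\!\bbt}\bbg\,dp.
\]
Multiplying by $\bbf(q)$ and integrating over $\bbQ$, I use the invariance $\bbf\cdot\bbg=(\bbf\cdot\bbg)\circ H$ to rewrite the integrand on $\bbQP$ as $\bigl((h/\bbf)^\bbt\,\bbf\bbg\bigr)\circ H$, apply \eqref{ass:inv A} to remove the $H$, and finish by Fubini and $\int\bbg=1$ to obtain $\|\cT h\|_\bbt^\bbt\le\|h\|_\bbt^\bbt$.

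For the equality statement \eqref{lem:cT prop eq}, strict convexity of $x\mapsto x^\bbt$ with $\bbt=2$ forces equality in the Jensen step at a.e.\ $q$ to mean $p\mapsto (h/\bbf)(Q(q,p))$ is constant on $\bbP$. Combined with the coverage assumption \eqref{ass:Q onto Q}, this forces $h/\bbf$ to be constant on $\bbQ$, so $h=\alpha\bbf$ a.e.; the value $\alpha=\int h/\int\bbf$ then follows from \eqref{lem:cT prop int}. The converse is immediate since $\cT\bbf=\bbf$ by \eqref{ass:inv fg} and $\int\bbg=1$. The main delicate point — really the only place requiring attention — is this equality case: one must argue carefully that ``constant in $p$ for a.e.\ $q$'' transfers via the coverage property to ``constant on $\bbQ$'', which is exactly what the formulation $Q(q,\bbP)=\bbQ$ for a.e.\ $q$ is designed to deliver.
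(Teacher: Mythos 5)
Your proposal is correct and follows essentially the same path as the paper's proof: rewrite $\cT h$ via the invariance $(\bbf\bbg)\circ H=\bbf\bbg$, integrate and use measure invariance plus Fubini for the mean preservation, apply Jensen's inequality under the probability $\bbg\,dp$ for the norm contraction, and use strict convexity together with the coverage property $Q(q,\bbP)=\bbQ$ to characterize equality. The only cosmetic difference is that the paper writes the Jensen step directly as a chain of integral inequalities rather than isolating the pointwise inequality first, but the content is identical.
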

Equation~\eqref{lem:cT prop alt} gives an explicit formula for $\tilde{\cT}(\tilde{h})$ acting in $\tilde{L}^2$.
\begin{proof}\ \\
\eqref{lem:cT prop alt}:
Using the invariance properties we have
$\int_\bbP \frac{h}{\bbf}\circ H\cdot (\bbf\cdot\bbg)\circ H=
 \int_\bbP \frac{h}{\bbf}\circ H \cdot (\bbf\cdot\bbg)
$
and $\bbf$ does not depend on $p\in\bbP$.\\
\eqref{lem:cT prop int}:
$\int_{\bbQ}\int_{\bbP} (h\cdot\bbg)\circ H =\iint_{\bbQP}(h\cdot \bbg)=
  \left(\int_{\bbQ} h\right)\left(\int_{\bbP}\bbg\right)$.\\
\eqref{lem:cT prop norm}:
$
||\cT h||_\bbt^\bbt=\int_\bbQ \left(\int_\bbP\frac{h}{\bbf}\circ H\cdot\bbg\right)^{\bbt}\bbf
\le
\int_\bbQ \int_\bbP\left(\frac{h}{\bbf}\circ H\right)^{\bbt}\bbg\,\bbf
 =
\iint_\bbQP\left(\frac{h}{\bbf}\right)^\bbt\circ H\cdot(\bbg\cdot\bbf)\circ H
 =
\iint_\bbQP\left(\frac{h}{\bbf}\right)^\bbt\bbg\cdot\bbf
$ the last one being equal to
$\left(\int_\bbQ\left(\frac{h}{\bbf}\right)^\bbt\bbf\right)\cdot\left(\int_\bbP\bbg\right)=
||h||_\bbt^\bbt$. For a given $q$ the equality occurs only if $(h/\bbf)(H(q,p))$ is a constant for ($\bbg$-)almost all $p$, but the coverage assumption assures that the constant $(h/\bbf)\circ H(q,\bbP)$ is the same for $\bbf$-almost all $q\in\bbQ$: $(h/\bbf)(\bbQ)=\int_\bbQ h/\int_\bbQ\bbf$, the value follows from~\eqref{lem:cT prop int}.
Note that~\eqref{lem:cT prop norm} hides the formula for the variance with respect to probability $\bbg$, the random variable being  the transported likelihood.
\end{proof}
\begin{rem}
  \label{rem:cT spec}
The operator $\cT$ is an averaging operator of the (transported) likelihood $h/\bbf$ with respect to the probability $\bbg$. The scalar product is monotone:
$0\le a\le b,\quad 0\le c\le d$ implies $\scpr{a}{c}\le \scpr{b}{d}$ and
$\cT$ is positive, in particular if $a\le b$ then $\cT a\le \cT b$.
The function  $\bbf$ provides the eigendirection of fixed points and by \eqref{lem:cT prop norm}
$\cT$ has its spectrum in the unit disk,
with 1 being a unique eigenvalue on the unit circle and has multiplicity 1.
For any $h\in \Lt$ one has the unique decomposition $h=\alpha\bbf+(h-\alpha\bbf)$ where $\alpha\bbf$ is a direction of the fixed points and
$h-\alpha\bbf\in N=\{a\in\Lt:\int a =0\}$ lies in an invariant subspace.
It is not \emph{a priori} clear under what conditions the eigen-value $1$ is isolated in the spectrum, in other words wether the contraction $||\cT h||<||h||$ is uniform on $N$, which would imply $\cT^n N\to \{0\}$ (point-wise) with exponential speed.
\end{rem}

\section{The adjoint operator $\cTs$}\label{sec:adjoint}
As $H$ is invertible the inverse map $H^{-1}$ is well defined as enjoys the same invariance properties as $H$. It turns out that
the operator ${\cTs}$ defined by $H^{-1}$:
\begin{equation}\label{eqn:selfadjoint}
{\cTs}h=\int_\bbP (h\cdot\bbg)\circ H^{-1}
\end{equation}
is \emph{adjoint} to $\cT$ with respect to the duality functional $\scpr{\cdot}{\cdot}$,
namely
\begin{lem}
  For $h,k\in\Lt$:
  \begin{equation}\label{lem: cTs}
    \scpr{\cT h}{k}=\scpr{h}{{\cTs} k}
  \end{equation}
\end{lem}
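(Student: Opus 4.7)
The plan is to expand $\scpr{\cT h}{k}$ by its definition, promote the single integral over $\bbQ$ to a double integral over $\bbQP$ by inserting the factor $\bbg$ (which integrates to one over $\bbP$), then use the measure-preservation of $H$ together with the invariance $(\bbf\cdot\bbg)\circ H=\bbf\cdot\bbg$ to shift the composition with $H$ onto the other factor.

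More concretely, first I write
\[
\scpr{\cT h}{k}=\int_\bbQ \frac{k(q)}{\bbf(q)}\Bigl(\int_\bbP (h\cdot\bbg)\circ H(q,p)\,dp\Bigr)dq
=\iint_\bbQP \frac{k(q)}{\bbf(q)}\,(h\cdot\bbg)\circ H(q,p)\,dq\,dp,
\]
using Fubini. Since $k/\bbf$ does not depend on $p$ and $\int_\bbP\bbg=1$, I multiply and divide by $\bbg(p)$ to rewrite the integrand as
\[
\frac{(k\cdot\bbg)(q,p)}{(\bbf\cdot\bbg)(q,p)}\cdot (h\cdot\bbg)\circ H(q,p).
\]
Now I perform the change of variables $(Q,P)=H(q,p)$. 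By the measure-invariance assumption \eqref{ass:inv A}, the Jacobian is one, and I may freely rename. Writing $(q',p')=H^{-1}(Q,P)$, the integral becomes
\[
\iint_\bbQP \frac{(k\cdot\bbg)(q',p')}{(\bbf\cdot\bbg)(q',p')}\cdot (h\cdot\bbg)(Q,P)\,dQ\,dP.
\]

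The key step is to use invariance \eqref{ass:inv fg} in the form $(\bbf\cdot\bbg)\circ H^{-1}=\bbf\cdot\bbg$, so that the denominator $(\bbf\cdot\bbg)(q',p')$ equals $(\bbf\cdot\bbg)(Q,P)=\bbf(Q)\bbg(P)$. Cancelling the $\bbg(P)$ in the numerator and denominator yields
\[
\iint_\bbQP \frac{h(Q)}{\bbf(Q)}\,(k\cdot\bbg)\circ H^{-1}(Q,P)\,dQ\,dP
=\int_\bbQ \frac{h(Q)}{\bbf(Q)}\,\cTs k(Q)\,dQ
=\scpr{h}{\cTs k},
\]
which is the claim.

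The only real subtlety, and where I would be careful, is the juggling of the $1/\bbf$ factor through the change of variables: inserting $\bbg/\bbg$ to symmetrize numerator and denominator into $\bbf\cdot\bbg$ is exactly what allows the invariance \eqref{ass:inv fg} to move the composition with $H$ from the $h$-side to the $k$-side. Besides this, integrability (to justify Fubini and the change of variables) follows from $h,k\in\Lt$ and the Cauchy--Schwarz bound \eqref{lem:cT prop sc}, and the support condition on $\bbQ$ avoids any issue with the denominators.
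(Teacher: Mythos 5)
Your proof is correct and takes essentially the same approach as the paper: pass to a double integral over $\bbQP$, change variables by $H$ using the measure-invariance \eqref{ass:inv A}, and invoke $(\bbf\cdot\bbg)\circ H=\bbf\cdot\bbg$ to move the composition from the $h$-factor to the $k$-factor. The only cosmetic difference is that the paper works from the alternative formula \eqref{lem:cT prop alt} and inserts $\bbf/\bbf$ after the change of variables, whereas you insert $\bbg/\bbg$ before it.
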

\begin{proof}
  Using \eqref{lem:cT prop alt} and invariance
$\scpr{\cT h}{k}=\int_\bbQ(\int_\bbP \frac{h}{\bbf}\circ H\cdot\bbg)\cdot k=
\iint_{\bbQP}\frac{h}{\bbf}\cdot(\bbg\cdot k)\circ H^{-1}=
\iint_{\bbQP}\frac{h}{\bbf}\cdot(\frac{k}{\bbf})\circ H^{-1}\cdot(\bbg\cdot\bbf)\circ H^{-1}=
\iint_{\bbQP}\frac{h}{\bbf}\cdot(\frac{k}{\bbf}\circ H^{-1})\cdot(\bbg\cdot\bbf)=
\int_{\bbQ}{h}(\int_\bbP\frac{k}{\bbf}\circ H^{-1}\cdot\bbg)=\scpr{h}{\cTs k}
$
\end{proof}

\subsection*{A sufficient condition for self-adjointness $\cT={\cTs}$}
Let $\tau$ be a measure preserving involution $\tau:\bbP\to\bbP$, $\tau\circ\tau={\rm id}$.
We can extend it to $\tau:\bbQP\to\bbQP$ by $\tau(q,p)=(q,\tau(p)$).
Assume that $\bbg$ is invariant with respect to $\tau$: $\bbg\circ\tau=\bbg$.
\begin{lem}\label{lem:sigma}
If $\tau\circ H^{-1}\circ\tau=H$ and $\bbg$ is invariant with respect to $\tau$ then ${\cTs}=\cT$.
\end{lem}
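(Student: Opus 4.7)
The plan is to transform the definition of $\cTs h$ step by step so that it matches the formula for $\cT h$, using each of the three hypotheses exactly once: the conjugation identity for $H$, the $\tau$-invariance of $\bbg$, and the measure-preservation of $\tau$.

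First I would invert the hypothesis. Since $\tau$ is an involution, applying $\tau$ on the left and right of $\tau\circ H^{-1}\circ\tau=H$ gives the equivalent identity $H^{-1}=\tau\circ H\circ\tau$. Substituting this into the definition~\eqref{eqn:selfadjoint} yields
\[
\cTs h(q)=\int_\bbP (h\cdot\bbg)\circ\tau\circ H\circ\tau\,(q,p)\,dp.
\]

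Next I would absorb the outer $\tau$. The extended involution acts by $\tau(q,p)=(q,\tau(p))$, so it is the identity on the $q$-coordinate; since $h$ depends only on $q$ and $\bbg\circ\tau=\bbg$ by assumption, one gets $(h\cdot\bbg)\circ\tau=h\cdot\bbg$ on $\bbQP$. Consequently
\[
\cTs h(q)=\int_\bbP (h\cdot\bbg)\circ H\,(q,\tau(p))\,dp.
\]
Finally I would perform the change of variables $p'=\tau(p)$ in the remaining integral. Because $\tau$ is a measure-preserving involution of $\bbP$, the substitution has unit Jacobian and $\bbP$ is both the domain and the image; the integral becomes $\int_\bbP (h\cdot\bbg)\circ H(q,p')\,dp'=\cT h(q)$, which is the desired identity.

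There is no real obstacle in this argument; each of the three hypotheses is used in turn at exactly one place. The only subtlety is careful bookkeeping of the action of $\tau$ on $\bbQP$, namely that it is trivial on the $\bbQ$-factor and the prescribed involution on the $\bbP$-factor. This has to be kept straight when commuting $\tau$ past the function $h\cdot\bbg$, but no more than that.
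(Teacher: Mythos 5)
Your proof is correct and takes essentially the same approach as the paper: both are a direct computation that substitutes the conjugation identity and then peels off the two copies of $\tau$ using the measure-preservation of $\tau$ and the invariance $\bbg\circ\tau=\bbg$. The only cosmetic differences are the direction of the calculation (you go from $\cTs h$ to $\cT h$, while the paper transforms $\cT h$ into $\cTs h$) and the order in which the two $\tau$'s are absorbed; otherwise the argument is the same.
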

As an example take $\bbQ=\bbP=\Real$,
$\tau$ to   be the symmetry (reflection) of the space $\bbP$ with respect to the
0, $\tau(p)=-p$. An even $\bbg(p)=\bbg(-p)$ is invariant with respect to $\tau$.
The involution $\tau(p)=-p$ is applicable in the most common choice of $\bbg$: a centralized Gaussian distribution.
In a particular case of $\bbf$ also a Gaussian the Hamiltonian movement $H$ is a rotation and $H^{-1}$ an opposite rotation.
The spreading by $\bbg$ is symmetric and whatever mass is transported from $(q,p)$ to $H(q,p)=(Q,P)$
by $H$ the same mass will be transported from $(q,-p)$ to $H^{-1}(q,p)=(\bar{Q},\bar{P})=(Q,-P)$ by $H^{-1}$.
The projection onto $\bbQ$ will produce the same mass transported by both maps $\cT$ and $\cTs$. Clearly this extends to non-standard Gaussians.

\begin{proof}
Measure invariance means that $\int_\bbP a\circ\tau=\int_\bbP a$. Let $(\bar{Q},\bar{P})=H^{-1}(q,p)$ then
  $\tau\circ H^{-1}(q,p)=\tau(\bar{Q},\bar{P})=(\bar{Q},\tau(\bar{P})$ and
  $\cT h=\int_\bbP (h\cdot \bbg)\circ H=\int_\bbP (h\cdot \bbg)\circ \tau\circ H^{-1}\circ \tau=
  \int_\bbP (h\cdot \bbg)\circ \tau\circ H^{-1}=\int_\bbP(h\cdot\bbg)\tau(\bar{Q},\bar{P})=
  \int_\bbP (h(\bar{Q})\cdot \bbg(\tau(\bar{P}))=\int_\bbP h(\bar{Q})\bbg(\bar{P})=\int_\bbP h\circ H^{-1}\cdot\bbg\circ H^{-1}=\cTs h$.
\end{proof}

In what follows we shall assume that $\cT=\cTs$. If it is not the case we can  use the algorithm with  $\cS={\cTs}\circ\cT$ such that ${\cS^\dag}=\cS$.

\section{Limits of the sequences $\cT^n$ of self-adjoint operator}
From  $||\cT h||_\bbt<||h||_\bbt$ by induction we obtain $||\cT^nh||_\bbt<||h||_\bbt$ unless $h=\alpha\bbf$, when equality holds.
For $h\in\Lt$ let
\begin{equation}\label{eqndef: V}
V(h)=\inf ||T^n h||_\bbt^\bbt =\lim ||T^n h||_\bbt^\bbt\,.
\end{equation}
We see that for any $M$, $V(h)=V(\cT^M(h))$.
As we are interested in the limit of the sequence $\cT^n h$, for a given $h$ we can assume
that for an arbitrary $\epsilon>0$ we have $||h||_\bbt^\bbt<V+\epsilon$, taking a high iterate $\cT^M h$ instead of $h$ if needed.

By a corollary to Alaoglu Theorem bounded sets in reflexive $\Lt$ are weakly (the same as weakly*) compact. Any infinite sequence $\cT^n h$ have a weak  converging subsequence $\cT^{m_n}h\rightharpoonup h_\infty$,
meaning $\scpr{\cT^{m_n} h}{b}\to\scpr{h_\infty}{b}$ for every $b\in \Lt$.
\begin{lem}
  \label{lem:norm of hinf}
  If the operator $T$ is self adjoint then for any weak accumulation point
  $h_\infty$ of the sequence $T^n h$ we have
  \[
  \liminf ||T^nh||^2\le ||h_\infty||\limsup||T^n h||\,.
  \]
\end{lem}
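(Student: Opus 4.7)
Since $\|\cT^n h\|_\bbt$ is non-increasing by~\eqref{lem:cT prop norm}, one has $\liminf = \limsup = \sqrt{V}$ with $V = V(h)$ as in~\eqref{eqndef: V}, so the stated inequality is equivalent to $\|h_\infty\|_\bbt \ge \sqrt{V}$. Weak lower semicontinuity of the norm already gives the reverse inequality for free, so the plan is in fact to establish the equality $\|h_\infty\|_\bbt^\bbt = V$. This equality is exactly what, combined with weak convergence, yields strong convergence $\cT^{m_n} h \to h_\infty$ by the Radon--Riesz property in the Hilbert space $\Lt$, which is precisely what the sketch of Theorem~\ref{thm:strong conv} needs next.

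The strategy is to evaluate $\scpr{\cT^{m_n} h}{\cT^{m_k} h}$ in two independent ways and take an iterated weak limit. Fix a weakly convergent subsequence $\cT^{m_n} h \rightharpoonup h_\infty$ and pass to a further subsequence of constant parity, so that $m_n + m_k$ is always even. By two uses of self-adjointness,
\[
\scpr{\cT^{m_n} h}{\cT^{m_k} h} \;=\; \scpr{h}{\cT^{m_n + m_k} h} \;=\; \|\cT^{(m_n + m_k)/2} h\|_\bbt^\bbt,
\]
the first equality moving $\cT^{m_n}$ across the inner product, the second splitting the even exponent in half. Holding $k$ fixed and sending $n \to \infty$, the left side tends to $\scpr{h_\infty}{\cT^{m_k} h}$ by weak convergence, while the right side tends to $V$ because $(m_n + m_k)/2 \to \infty$ and $\|\cT^N h\|_\bbt^\bbt \to V$. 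Hence $\scpr{h_\infty}{\cT^{m_k} h} = V$ for every $k$ in the subsequence. Now let $k \to \infty$: weak convergence $\cT^{m_k} h \rightharpoonup h_\infty$ tested against $h_\infty$ itself gives $\scpr{h_\infty}{\cT^{m_k} h} \to \|h_\infty\|_\bbt^\bbt$, and the left side is the constant $V$, forcing $\|h_\infty\|_\bbt^\bbt = V$.

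The only real obstacle is conceptual: one has to iterate the weak limit along two independent indices $n$ and $k$ drawn from the same subsequence but treated asymmetrically, rather than along a diagonal. The parity-matching extraction is purely bookkeeping, keeping the collapsed exponent $(m_n + m_k)/2$ integral; dropping it would leave an odd-sum residue $\scpr{\cT^N h}{\cT^{N+1} h}$, whose convergence to $V$ would need the operator positivity $\scpr{\cT x}{x} \ge \|\cT x\|_\bbt^\bbt$, an extra structure that does hold for our averaging $\cT$ but that the argument above sidesteps.
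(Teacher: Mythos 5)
Your argument is correct but proves a different (and stronger) statement under a stronger hypothesis than the paper's Lemma~\ref{lem:norm of hinf}, and it takes a genuinely different route to get there. The paper states and proves the lemma for an \emph{arbitrary} self-adjoint operator $T$: no assumption is made that the norms $\|T^n h\|$ converge, which is why the conclusion is the asymmetric inequality $\liminf \|T^n h\|^2 \le \|h_\infty\|\limsup \|T^n h\|$. The paper's trick is to set $h' = T^N h$ with $N$ chosen so that infinitely many $m_n - N = 2m$ are even, collapse $\|T^m h'\|^2 = \scpr{T^{2m}h'}{h'} \to \scpr{h_\infty}{h'}$, and then send $N\to\infty$ with an $\epsilon$ to trade $\|h'\|=\|T^N h\|$ for $\limsup\|T^n h\|$. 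You instead import the non-expansivity of $\cT$ from~\eqref{lem:cT prop norm}, which immediately gives $\lim\|\cT^n h\|_\bbt^\bbt = V$ and collapses the lemma to $\|h_\infty\|_\bbt \ge \sqrt{V}$; you then go further and prove the \emph{equality} $\|h_\infty\|_\bbt^\bbt = V$, which is really the content of Corollary~\ref{cor:norm of hinf} rather than Lemma~\ref{lem:norm of hinf}. Your iterated-weak-limit scheme (first freeze $k$ and send $n\to\infty$ to get $\scpr{h_\infty}{\cT^{m_k}h}=V$, then send $k\to\infty$ and test against $h_\infty$) is clean and avoids the $\epsilon$ bookkeeping entirely. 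Both proofs live on the same self-adjointness collapse $\scpr{T^a h}{T^b h}=\|T^{(a+b)/2}h\|_\bbt^\bbt$ and both must engineer an even exponent sum (you via a constant-parity subsequence, the paper via the shift $N$); the trade-off is that the paper buys generality at the cost of a clumsier limit argument, while you buy clarity and a stronger conclusion at the cost of an extra hypothesis that happens to be available. It would be worth stating explicitly that you are using $\lim\|\cT^n h\|_\bbt$ exists, which fails for general self-adjoint $T$, so your argument does not establish the lemma as literally written.
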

\begin{proof}
With $ T^{m_n}h\rightharpoonup h_\infty$ let $N$ be such that the sequence  $(m_n-N)$ contains infinitely many even indices $2m$.
Then for $h'=T^N h$ we have $T^{2m}h'\rightharpoonup h_\infty$ as well, and:
\[
||T^m h'||^2=\scpr{T^m h'}{T^m h'}=\scpr{T^{2m} h'}{h'}\to\scpr{h_\infty}{h'}\le||h_\infty||\cdot||h'||\,.
\]
For an $\epsilon >0$ we can choose $N$ sufficiently large such that
$||T^m h'||=||T^{m+N}h|| \ge \liminf ||T^n h||-\epsilon$ and
$||h'||=||T^N h||\le \limsup_N ||T^N h||+\epsilon$.
Arbitrary choice of $\epsilon>0$ proves the Lemma.
\end{proof}
\begin{rem}
  For any bounded operator $T$ on $\Lt$ and any weak accumulation point $T^{m_n}h\rightharpoonup h_\infty $ we have
  \[
  ||h_\infty||\le\limsup ||T^{m_n}h||\le\limsup ||T^{n}h||\,.
  \]
\end{rem}
\begin{proof}
 This is standard: for an $\epsilon>0$ we can find $N$ such that for $m_n>N$ we have $||T^{m_n}h||\le\limsup||T^n h||+\epsilon$.
 Then $||h_\infty||_\bbt^\bbt=\scpr{h_\infty}{h_\infty}\leftarrow\scpr{T^{m_n}h}{h_\infty}\le ||T^{m_n}h||_\bbt||h_\infty||$.
\end{proof}
\begin{cor}{\label{cor:norm of hinf}}
  If for a self-adjoint operator $T$ the sequence of norms $||T^nh||$ converges then every weak accumulation point
  $T^{m_n}h\rightharpoonup h_\infty$ of the sequence $T^n h$
  has the norm $||h_\infty||= \lim ||T^n h||$ and is a strong limit of the same subsequence.
  \[
  ||T^{m_n}h-h_\infty||_\bbt^\bbt\to 0\,.
  \]
\end{cor}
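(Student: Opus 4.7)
The plan is to combine the two preceding results (the lemma and the remark) with the classical Hilbert-space fact that weak convergence together with convergence of norms implies strong convergence.

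First I would identify the norm of the accumulation point. Since $\|T^n h\|$ converges, write $L=\lim\|T^n h\|_\bbt=\liminf\|T^n h\|_\bbt=\limsup\|T^n h\|_\bbt$. The preceding lemma gives
\[
L^2=\liminf\|T^n h\|_\bbt^2\le\|h_\infty\|_\bbt\cdot\limsup\|T^n h\|_\bbt=\|h_\infty\|_\bbt\cdot L,
\]
so $L\le\|h_\infty\|_\bbt$. The remark that follows gives the reverse inequality $\|h_\infty\|_\bbt\le\limsup\|T^n h\|_\bbt=L$. Hence $\|h_\infty\|_\bbt=L=\lim\|T^n h\|_\bbt$.

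Next I would convert this into strong convergence of the subsequence by expanding the squared norm of the difference using the scalar product on $\Lt$:
\[
\|T^{m_n}h-h_\infty\|_\bbt^2=\|T^{m_n}h\|_\bbt^2-2\scpr{T^{m_n}h}{h_\infty}+\|h_\infty\|_\bbt^2.
\]
The first term tends to $L^2=\|h_\infty\|_\bbt^2$ by the norm convergence just established. The middle term tends to $\scpr{h_\infty}{h_\infty}=\|h_\infty\|_\bbt^2$ because $T^{m_n}h\rightharpoonup h_\infty$ means $\scpr{T^{m_n}h}{b}\to\scpr{h_\infty}{b}$ for every $b\in\Lt$, and $h_\infty\in\Lt$ is a valid test function. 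Collecting these, the right-hand side tends to $\|h_\infty\|_\bbt^2-2\|h_\infty\|_\bbt^2+\|h_\infty\|_\bbt^2=0$, which is the claim.

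There is no real obstacle here: the only subtle point is that the lemma invoked requires self-adjointness, and this is exactly the hypothesis, so both inequalities sandwiching $\|h_\infty\|_\bbt$ are available. The final identification is then a direct Hilbert-space computation and produces both stated conclusions simultaneously.
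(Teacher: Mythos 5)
Your proof is correct and follows the same route the paper indicates: derive $\|h_\infty\|_\bbt=\lim\|T^n h\|_\bbt$ from the preceding Lemma and Remark, and then upgrade weak convergence plus norm convergence to strong convergence. The paper's proof compresses the second step to ``strong convexity of the ball \ldots or a direct manipulation of $\|\cT^{m_n}h-h_\infty\|_\bbt^\bbt$,'' and your expansion of the squared norm via the inner product is precisely that direct manipulation made explicit (one could add a word for the degenerate case $L=0$, where the Remark alone already forces $\|h_\infty\|=0$, but this does not affect the substance).
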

\begin{proof}
  The value of norm $||h_\infty||$ follows from the previous Lemma and Remark. The strong convergence is standard again.
  Due to the strong convexity of the ball in $\Lt$ (or a direct manipulation of $||\cT^{m_n} h-h_\infty||_\bbt^\bbt$) a weak convergent sequence with the convergence of the norms to the norm of the limit converges also in the strong sense.
\end{proof}
\begin{prop}\label{prop:norm of hinf}
Assume $\cT=\cTs$. Then $\cT^n h $ converges strongly.
\[
||\cT^n h-\frac{\int h}{\int\bbf}\bbf||_\bbt^\bbt\to 0\,.
\]
\end{prop}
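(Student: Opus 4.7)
The plan is to combine the monotone decrease of the $\Lt$-norm from Lemma~\ref{lem:cT prop} with weak compactness of bounded sets in the reflexive space $\Lt$, use Corollary~\ref{cor:norm of hinf} to upgrade weak subsequential convergence to strong convergence, and then pin down the limit via the equality case~\eqref{lem:cT prop eq}. By~\eqref{lem:cT prop norm} the sequence $\bigl(||\cT^n h||_\bbt\bigr)_n$ is non-increasing, hence converges to some $L\ge 0$. A weak subsequence $\cT^{m_n}h \rightharpoonup h_\infty$ exists by Alaoglu. The Remark following Lemma~\ref{lem:norm of hinf} gives $||h_\infty||_\bbt \le L$, whereas Lemma~\ref{lem:norm of hinf} itself, which exploits self-adjointness through the identity $\scpr{\cT^m h'}{\cT^m h'}=\scpr{\cT^{2m}h'}{h'}$, gives $L^\bbt \le ||h_\infty||_\bbt\cdot L$. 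Thus $||h_\infty||_\bbt = L$, and Corollary~\ref{cor:norm of hinf} promotes the subsequence to a strongly convergent one.

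Next I would identify $h_\infty$. Since $\cT$ is bounded, strong convergence propagates one step: $\cT^{m_n+1}h \to \cT h_\infty$, whence $||\cT h_\infty||_\bbt = L = ||h_\infty||_\bbt$. The equality clause~\eqref{lem:cT prop eq} then forces $h_\infty = \alpha\bbf$ for some scalar $\alpha$. Testing weak convergence against $\bbf\in\Lt$ and using \eqref{lem:cT prop int} together with \eqref{lem:cT prop int f} gives
\[
\alpha\int_\bbQ\bbf \;=\; \scpr{h_\infty}{\bbf} \;=\; \lim_n \scpr{\cT^{m_n}h}{\bbf} \;=\; \lim_n \int_\bbQ \cT^{m_n}h \;=\; \int_\bbQ h,
\]
so $\alpha=\int h/\int\bbf$. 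Because every weak accumulation point of $(\cT^n h)$ must equal this same $\alpha\bbf$, the full sequence converges weakly to $\alpha\bbf$; combined with $||\cT^n h||_\bbt \to L=||\alpha\bbf||_\bbt$, the uniform convexity argument already invoked in Corollary~\ref{cor:norm of hinf} applies to the whole sequence and yields the claimed strong convergence.

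The main obstacle is the identification step: one has to rule out the possibility that $h_\infty$ is merely some vector of norm $L$ not aligned with $\bbf$. This is exactly where the equality analysis~\eqref{lem:cT prop eq}, ultimately powered by the coverage assumption $Q(q,\bbP)=\bbQ$, becomes indispensable; without it the norms could saturate at a vector outside the fixed direction. Self-adjointness, in contrast, carries its weight one step earlier, inside Lemma~\ref{lem:norm of hinf}: it converts $||\cT^m h'||_\bbt^\bbt$ into the scalar product $\scpr{\cT^{2m}h'}{h'}$ in which only one factor is moving weakly, thereby producing the crucial lower bound $||h_\infty||_\bbt\ge L$ that matches the trivial upper bound from lower semicontinuity of the norm.
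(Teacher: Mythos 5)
Your proof follows the same strategy as the paper: monotone norm decay plus weak compactness gives a weakly convergent subsequence, Corollary~\ref{cor:norm of hinf} (via self-adjointness in Lemma~\ref{lem:norm of hinf}) upgrades it to strong convergence with $||h_\infty||=\lim||\cT^n h||$, applying $\cT$ and using the equality clause~\eqref{lem:cT prop eq} forces $h_\infty=\alpha\bbf$, and the scalar is pinned with~\eqref{lem:cT prop int f}. The only cosmetic difference is that you pass $\cT^{m_n+1}h\to\cT h_\infty$ strongly where the paper notes it suffices to pass weakly; both are valid and lead to the identical conclusion.
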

\begin{proof}
By Lemma~\ref{lem:cT prop norm} the sequence  $||\cT^n h||^2$ converges to $V(h)$ from~\eqref{eqndef: V}. By Corollary~\ref{cor:norm of hinf} every weak converging subsequence
of $\cT^n h$ converges strongly and the norm of the limit is equal $\sqrt{V(h)}$.
In particular if $\cT^{m_n}h\rightharpoonup h_\infty$ then $\cT^{m_n+1}h\rightharpoonup \cT(h_\infty)$ and $||h_\infty||_\bbt^\bbt=V(h)=||\cT h_\infty||_\bbt^\bbt$.
By Lemma~\ref{lem:cT prop}~\eqref{lem:cT prop eq} we have thus $h_\infty=\alpha \bbf=\cT h_\infty$, with $\alpha=\int h/\int\bbf$ by~\eqref{lem:cT prop int f}.
That means that every weak converging subsequence of $\cT^nh$ converges to the same limit $\alpha\bbf$. But any subsequence of $\cT^nh$ contains a weakly convergent subsequence, hence $\cT^n h$ converges weakly, and also strongly to $\alpha \bbf$.
\end{proof}
This concludes the proof of Theorem~\ref{thm:strong conv}.

\section{The operator $\cT$ as a kernel operator}
\subsection*{The kernel $K(q,Q)$}
In this section we shall assume a stronger version of  the covering property \eqref{ass:Q onto Q} of the map $H:\bbQP\to\bbQP$, $(Q,P)=H(q,p)$
on finite dimensional spaces $\bbQ$ and $\bbP$ (manifolds modeled on $\Real^d$, usually $\bbP$ will be a co-tangent space to $\bbQ$).
\begin{eqnarray}\label{eqn:def cQ q}
  \cQ_q:\bbP\to\bbQ\text{ defined by }\cQ_q(p)=Q(q,p)&&\text{ is a bijection for almost every }q\label{ass:bijection}\\
\label{eqn:def d cQ inv}  \frac{\partial Q(q,p)}{\partial p}&&\text{ exists and is invertible for a.e. }q\label{ass:Jacobian}
\end{eqnarray}
By assumption the function $\cQ_q^{-1}:\bbQ\to\bbP$, $p=\cQ_q^{-1}(Q)$ is well defined and so is
$\cP_q:\bbQ\to\bbP$, $\cP_q(Q)=P(q,p)=P(q,\cQ_q^{-1}(Q)$.
Define the Jacobian (determinant) $\cD_q$ of the partial derivative~\eqref{ass:Jacobian} by
\begin{equation}  \label{eqndef: D}
\cD_q(Q)=\left|{\rm det}\left(\frac{\partial \cQ_q(p)}{\partial p}\right)^{-1}\right|\in \Real, \text{ where } Q=\cQ_q(p)\,.
\end{equation}
The movement $(Q,P)=H(q,p)$ must be sufficiently smooth in order for $\cD_q$ to behave.
In order to simplify our reasoning  we shall assume that, similarly as in~\eqref{eqn:def cQ q} and~\eqref{eqn:def d cQ inv}, the map $\cP_p:\bbQ\to\bbP$, defined by $\cP_p(q)=P(q,p)$,  is a bijection with
invertible derivative of which the Jacobian (similarly as in~\eqref{eqndef: D}) $\cD_p(P)=|{\rm det}\,(\partial \cP_p(q)/\partial q)^{-1}|=|{\rm det}\,\partial q/\partial P|$.

\begin{rem}
  In such situation (one can assume that) the measure spaces $(\bbQ, dQ)$ and $(\bbP,dp)$ are isomorphic in the measurable sense. Then the Jacobian $\cD_q$ can be treated as the Radon-Nikodym derivative of the transport of the measure $(dp)$ on the fiber $\{q\}\times\bbP$ performed  by $H$ and the projection, in effect by $\cQ_q$, to the measure $(\cD_q\cdot dQ)$ on $\bbQ$. Similarly $\cD_p$ transfers the measure $dq$ on $\bbQ$ to the measure $\cD_p(P)\,dP$ on $\bbP$.
\end{rem}

With the change of variables by $p\mapsto Q=\cQ_q(p)$, $p=\cQ_q^{-1}(Q)$, $P=\cP_q(Q)$,
$dp= \cD_q(Q)\,dQ$ we get
$\cT h(q)=\int_\bbP h(Q)\bbg(P)\, dp=\int_\bbQ h(Q)\cdot\bbg(\cP_q(Q))\cD_q(Q)\,dQ$.
This shows that  the operator $\cT$ is a kernel operator, namely $\cT h(q)=\scpr{h(Q)}{\cK_q(Q)}$ where
$\cK_q(Q)=K(q,Q)$ is defined by
  \begin{eqnarray}\label{eqn:def K}
    K(q,Q)&=&\bbf(Q)\cdot\bbg(\cP_q(Q))\cdot\cD_q(Q)\qquad       \text{ as then} \label{eqn:ct kernel}\\
    \cT h(q)&=&\int_\bbP h(Q(q,p))\bbg(P(q,p))\,dp=\int_\bbQ \frac{h(Q)\cdot\bbg(\cP_q(Q))\bbf(Q)}{\bbf(Q)}\cD_q(Q)\,dQ
    \nonumber\\
    &=&\int_\bbQ\frac{h(Q)\cdot K(q,Q)}{\bbf(Q)}\,dQ
    =\scpr{h}{\cK_q}
    \nonumber
  \end{eqnarray}
 When a kernel operator has a finite $\Lt$ norm it is compact, and then its spectrum is discrete except the unique possible accumulation point at 0.
  We know by Remark~\ref{rem:cT spec} that the spectrum of $\cT$ has all the eigenvalues inside the unit disk except for the eigenvalue 1 which has multiplicity 1. The compactness of $\cT$ would provide the spectral gap and exponential convergence in norm of each sequence $\cT^n h$ to its fixed point limit $\alpha\bbf$ with the rate of convergence given by the second largest eigenvalue, in this case, strictly smaller than 1.
The norm for $A(q,Q)\in\Lt(\bbQQ)$ is given by
  \[
 ||A||_2^2=\iint_\bbQQ \frac{A^2(q,Q)}{\bbf(q)\cdot\bbf(Q)}\,dQ\,dq\,.
  \]
  For a.e. $q$ the maps $Q\mapsto \cA_q(Q)=A(q,Q)$ should belong to $\Lt(\bbQ)$ (with respect to $Q$, thus $\bbf(Q)$ in the denominator) and then the map
  defined by the norms  $q\mapsto ||\cA_q||_2$ should belong to $\Lt(\bbQ)$ (with respect to $q$ thus $\bbf(q)$ in the denominator),
  in both cases use the norm defined in the first line of Section~\ref{sec: operator cT}. One might be more comfortable working  with an analogous kernel expression in the space $\tilde{L}^2$.

\begin{lem}[The norm of the kernel $K$]\label{lem:norm of K}
The $\Lt$ norm of the kernel $K$~\eqref{eqn:def K} can be expressed as:
\[
||K||_\bbt^\bbt=\iint_\bbPP \bbg(p)\bbg(P) \cD_q(Q)\cD_p(P)\,dP\,dp\,,
\]
where $Q$ and $q$ are well defined functions of $p$ and $P$, $q=\cP_p^{-1}(P)$ and $Q=\cQ_q(p)$.
\end{lem}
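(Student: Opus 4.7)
The strategy is a two-stage change of variables, each of which removes one weight and uncovers one Jacobian factor, combined with a single application of the invariance $(\bbf\cdot\bbg)\circ H=\bbf\cdot\bbg$ to dispose of the remaining density factors. First I would expand the squared norm directly from the definitions: substituting $K(q,Q)=\bbf(Q)\bbg(\cP_q(Q))\cD_q(Q)$ into $||K||_\bbt^\bbt=\iint_\bbQQ K(q,Q)^2/(\bbf(q)\bbf(Q))\,dQ\,dq$ gives
\[
||K||_\bbt^\bbt=\iint_\bbQQ \frac{\bbf(Q)\,\bbg(P)^2\,\cD_q(Q)^2}{\bbf(q)}\,dQ\,dq,
\]
where throughout $P=\cP_q(Q)$ and $(Q,P)=H(q,\cQ_q^{-1}(Q))$.

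Next, for each fixed $q$ I would change the variable of integration from $Q$ back to $p$ using the bijection $\cQ_q:\bbP\to\bbQ$ from~\eqref{eqn:def cQ q}. By the definition~\eqref{eqndef: D} of $\cD_q$ we have $dp=\cD_q(Q)\,dQ$, so one power of the Jacobian and one $dQ$ combine into $dp$, and the integral becomes
\[
||K||_\bbt^\bbt=\iint_\bbQP \frac{\bbf(Q)\,\bbg(P)^2\,\cD_q(Q)}{\bbf(q)}\,dp\,dq,
\]
with $(Q,P)=H(q,p)$. At this stage the invariance assumption~\eqref{ass:inv fg} supplies exactly what is needed: $\bbf(Q)\bbg(P)=\bbf(q)\bbg(p)$, so multiplying and dividing one factor of $\bbg(P)$ yields $\bbf(Q)\bbg(P)^2=\bbf(q)\bbg(p)\bbg(P)$ and the $\bbf(q)$ in the denominator cancels, leaving
\[
||K||_\bbt^\bbt=\iint_\bbQP \bbg(p)\,\bbg(P)\,\cD_q(Q)\,dp\,dq.
\]

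Finally I would perform the symmetric change of variables in the remaining integral: with $p$ held fixed, the assumed bijection $\cP_p:\bbQ\to\bbP$, $q\mapsto P(q,p)$, has Jacobian giving $dq=\cD_p(P)\,dP$, and after this substitution $q=\cP_p^{-1}(P)$ is a function of $(p,P)$ and $Q=\cQ_q(p)$ is determined accordingly. This produces exactly the claimed formula
\[
||K||_\bbt^\bbt=\iint_\bbPP \bbg(p)\,\bbg(P)\,\cD_q(Q)\,\cD_p(P)\,dP\,dp.
\]
The main obstacle is purely bookkeeping: the variables $(q,p,Q,P)$ are linked by $H$ so one must keep straight, at each step, which two are independent and which two are functions of them, and verify that the two successive bijections $\cQ_q$ and $\cP_p$ (guaranteed by the strengthened coverage hypothesis) genuinely change the domain of integration from $\bbQQ$ to $\bbQP$ and then to $\bbPP$. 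Once this accounting is performed correctly the identity follows by the elementary chain of substitutions above, with invariance used precisely once.
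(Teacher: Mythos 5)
Your proof is correct and follows essentially the same route as the paper: apply the invariance $\bbf(Q)\bbg(P)=\bbf(q)\bbg(p)$ once to eliminate the density factors, then perform the two successive changes of variables $\cD_q(Q)\,dQ=dp$ (fixed $q$) and $dq=\cD_p(P)\,dP$ (fixed $p$) to convert $\iint_\bbQQ$ into $\iint_\bbPP$ while shedding one power of $\cD_q(Q)$. The only cosmetic difference is that the paper rewrites $K^2$ via invariance before touching the integral whereas you interleave the invariance step between the two substitutions; the domain bookkeeping $\bbQQ\to\bbQP\to\bbPP$ you spell out is exactly what the paper's displayed chain implicitly performs.
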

\begin{proof}
Using invariance (with $p=\cQ_q^{-1}(Q)$, $P=\cP_q(Q)$) we have:
\[
K^2(Q,q)=\bbf^2(Q)\bbg^2(P)\cD_q^2(Q)=
\bbf(Q)\bbf(Q)\bbg(P)\bbg(P)\cD_q^2(Q)=
\bbf(Q)\bbf(q)\bbg(p)\bbg(P)\cD_q^2(Q)\,.
\]
In the integral expression for the norm we change both variables $q$ and $Q$ to variables $p$ and $P$ using
the Jacobians $\cD_q(Q)\,dQ=dp$ and $dq=\cD_p(P)\,dP$
\begin{align}\label{eqn: norm of K}
||K||_\bbt^\bbt&=
\iint_\bbQQ\frac{K^2(q,Q)}{\bbf(Q)\bbf(q)}\,dQ\,dq=
\iint_\bbQQ \bbg(p)\bbg(P)\cD_q^2(Q)\,dQ\,dq\\
&=\iint_\bbQP  \bbg(p)\bbg(P)\cD_q(Q)\,dP\,dq=
\iint_\bbPP  \bbg(p)\bbg(P)\cD_q(Q)\cD_p(P)\,dP\,dp
\end{align}
\end{proof}
\begin{cor}
  \label{cor:K compact if det bounded}
  If the product of determinants $\cD_q(Q)\cdot\cD_p(P)$ is uniformly bounded from above then the $\Lt$ norm of the kernel $K$ is finite and the operator $\cT$ is compact.
\end{cor}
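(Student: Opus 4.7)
The plan is to read the bound on $\cD_q(Q)\cdot\cD_p(P)$ directly into the integral expression for the kernel norm supplied by Lemma~\ref{lem:norm of K}, and then invoke the standard fact that a square integrable kernel defines a Hilbert--Schmidt operator, hence a compact one on $\Lt(\bbQ)$.

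Concretely, the first step is to start from
\[
||K||_\bbt^\bbt=\iint_\bbPP \bbg(p)\bbg(P)\,\cD_q(Q)\cD_p(P)\,dP\,dp,
\]
which Lemma~\ref{lem:norm of K} already provides with $q=\cP_p^{-1}(P)$ and $Q=\cQ_q(p)$. The hypothesis gives a finite constant $M$ with $\cD_q(Q)\cdot\cD_p(P)\le M$ pointwise in $(p,P)\in\bbPP$, so the integrand is dominated by $M\cdot\bbg(p)\bbg(P)$. The second step is to apply Fubini together with the normalization $\int_\bbP\bbg=1$ built into the setup, giving
\[
||K||_\bbt^\bbt \le M\iint_\bbPP\bbg(p)\bbg(P)\,dP\,dp = M\left(\int_\bbP\bbg\right)^2 = M<\infty.
\]

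The third step is to conclude compactness. By \eqref{eqn:ct kernel} the operator acts as $\cT h(q)=\scpr{h}{\cK_q}$ with kernel $K(q,Q)/\bbf(Q)$ integrated against $h$, and the norm displayed just above Lemma~\ref{lem:norm of K} is exactly the Hilbert--Schmidt norm of this kernel in the Hilbert space $\Lt(\bbQ)$ (equivalently, after passing to likelihoods, in $\tilde{L}^2$). A bounded kernel operator with finite Hilbert--Schmidt norm is compact, so $\cT$ is compact as an operator on $\Lt(\bbQ)$.

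I do not expect any genuine obstacle here: the calculation is already done in Lemma~\ref{lem:norm of K}, and the only content of the corollary is inserting the pointwise bound and citing the compactness of Hilbert--Schmidt operators. The only minor point worth stating carefully is that the reorganization of variables in \eqref{eqn: norm of K} tacitly used the bijectivity and Jacobian invertibility assumptions \eqref{eqn:def cQ q}--\eqref{eqn:def d cQ inv} on both $\cQ_q$ and $\cP_p$, which are exactly the stronger coverage hypotheses in force throughout this section, so no additional verification is needed.
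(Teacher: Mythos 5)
Your proposal is correct and matches the paper's (implicit) argument exactly: the corollary follows by inserting the pointwise bound $\cD_q(Q)\cD_p(P)\le M$ into the formula of Lemma~\ref{lem:norm of K}, using $\int_\bbP\bbg=1$ twice to obtain $||K||_\bbt^\bbt\le M<\infty$, and invoking the fact (stated in the text just above the lemma) that a kernel of finite $\Lt$ norm defines a Hilbert--Schmidt and hence compact operator on $\Lt(\bbQ)$, or equivalently on $\tilde{L}^2$ after passing to likelihoods.
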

  \subsection*{Hamiltonian movement}

In the following we shall prove that $||K(q,Q)||_2^2$ is finite in a special case of a Hamiltonian movement when $\bbQ=\bbP=\Real^d$.
The distribution of choice $\bbg(p)$ is usually the standard Gaussian
(mean 0 and covariance equal to identity matrix, that is $-\log\bbg(p)=\scpr{p}{p}/2$ up to the irrelevant additive normalising constant), but the proof is provided for any uniformly strictly concave distribution.
Define $\bbU(Q)=-\log(\bbf(Q))$ then $\bbU''(Q)=\partial^2 (-\log \bbf(Q))/\partial Q^2$, a symmetric matrix by assumed continuous differentiability.
We assume that the target distribution $\bbf$ is  \emph{uniformly strictly logarithmic concave},
that is $\bbU''$ is a (strictly) positive operator, \emph{i.e.} it is  bounded away (in either norm or spectrum sense) from $0$ and $\infty$ uniformly on $q\in\bbQ$.
Then $\bbU''(Q)$ can be bounded away from $0$ and $\infty$ by two symmetric, strictly positive bounded operators constant with respect to $Q$, then the distribution $\bbf$ can be then estimated both from above and below by two Gaussians (with some positive finite multiplicative constants). Similar statements hold for $\bbg$ and $\bbV(P)=-\log(\bbg(P))$.

Consider the spaces $\LQt$, $\LPt$ and $\LQPt$ of functions on $\bbQ$ (positions), $\bbP$ (momenta) and
$\bbQP$ (configurations) to $\Real$ with the appropriate (integral) norms.
Given $0\le \bbf\in \LQt$ and $0\le \bbg\in\LPt$ we can define
(\emph{potential energy}) $\bbU:\bbQ\to\Real$, $\bbU(q)=-\log(\bbf(q))$ and (\emph{kinetic energy})
$\bbV:\bbP\to\Real$, $\bbV(p)=-\log(\bbg(p))$.
We see that if $\bbf(q)$ and $\bbg(p)$ represent the densities of probability distributions of two independent variables then
$\bbf(q)\cdot\bbg(p)=\exp(-(\bbU(q)+\bbV(p))$ represents a density of their (independent) joint distribution.
We shall use the name \emph{Hamiltonian} for the total energy $\cH=\bbU+\bbV$.
\begin{rem}
In fact one may consider a more general case when $\bbg=\bbg(q,p)$ (understood as conditional $\bbg(p|q)$) and thus $\bbV=\bbV(p|q)$. However to simplify the calculations we shall deal only with the case of $\bbg$ independent on~$q$.
\end{rem}
The Hamiltonian energy provides the following (Hamiltonian) dynamics $(q,p)\mapsto(Q,P)$, where $(Q,P)=(Q_t(q,p),P_t(q,p))$ is the position after time $t$ of the point starting at $(q,p)$ ruled by the system of equations:
\begin{eqnarray}\label{eqn: d QP dt}
  \dot{Q}=\frac{dQ}{dt}&=&\frac{\partial \cH(Q,P)}{\partial P}\\
\nonumber  \dot{P}=\frac{dQ}{dt}&=-&\frac{\partial \cH(Q,P)}{\partial Q}\,.
 \end{eqnarray}
 The dot derivative is the derivative with respect to time $\dot{A}=\partial A/\partial t$. We see that the normalizing constants of $\bbg$ and $\bbf$ are irrelevant to the motion.
 Formally the solutions can be written as
 \begin{eqnarray}\label{eqn: int QP dt}
   Q(t) &=& q+\int_0^t \frac{\partial \cH}{\partial P}(Q(s), P(s))\,ds=q+\int_0^t \bbV'(P(s))\,ds \\
 \nonumber  P(t) &=& p-\int_0^t \frac{\partial \cH}{\partial Q}(Q(s), P(s))\,ds= p-\int_0^t \bbU'(Q(s))\,ds \,,
 \end{eqnarray}
 assuming all functions are sufficiently regular.

 The movement $H$ is defined by $(q,p)\mapsto H_t(q,p)=(Q_t,P_t)$, and the  map $\cT_t$ is defined by~\eqref{eqdef:cT} using $H_t$.
 For a given fixed time $t$ we shall  skip the subscript $t$. For a function $W\in\LQP$  we denote $t\mapsto W_t(q,p)=(W\circ H_t)(q,p)=W(Q,P)$.

The value of the Hamiltonian $\cH$ does not change along the trajectories and
the Hamiltonian motion conserves the Lebesgue measure:
 \begin{eqnarray}
  && \cH\circ H=\cH, \qquad\text{or after taking the exponent}\qquad (\bbf\cdot \bbg)\circ H=\bbf\cdot \bbg\,.
\label{eqn:constant Hamiltonian Rd} \\
   &&\iint_{\bbQP} W_t\,d(q p)=\iint_{\bbQP} W\, d(q p) \text{ for any }W\in\LQP\,,
\label{eqn:Hamiltonian of Lebesgue Rd}
 \end{eqnarray}
 which corresponds to the properties~\eqref{ass:inv fg} and~\eqref{ass:inv A}.

 We define $\cT$ as in ~\eqref{eqdef:cT}, to stress the dependence of $\cT$ on the choice of $t$ we use the notation $\cT_t$.
Given $t$ and $n$ the iterate of the map $\cT^n_t$ is in general different from the map $\cT_{tn}$ with time $nt$.

The subset of (target) functions $\bbf$ with some interest has usually some additional properties: $\bbU,\bbV\to +\infty$, as $|q|,|p|\to\infty$, fast enough, so that the functions $\bbf, \bbg$ are bounded, integrable and vanish at infinity (or at boundaries of the support) meaning that  $\cH$ escapes to infinity when $(q,p)$ approaches these boundaries.
 This assures that the level sets of the Hamiltonian and therefore the  trajectories are bounded (and closed).
 Often additionally the derivatives of $\bbU$ and $\bbV$ (or $\bbf$ and $\bbg$) are zero at a unique point (no stationary points except this one).

\subsection*{Differential equation solutions to the gradient of the Hamiltonian flow}
Assume that all the functions involved have sufficient smoothness, so that derivative exists and their order can be changed.
In order to estimate the norm of the kernel we need to have a good control on the Jacobians $\cD_q(Q)$ and $\cD_p(P)$, in other words the
partial derivatives of the time evolution of configuration with respect to initial configuration
$\partial (Q,P)/\partial (q,p)$.

By assumption that $\cH(Q,P)=\bbU(Q)+\bbV(P)$ (\emph{i.e.} the spreading $\bbg$ does not depend on the position $q$) its mixed second derivatives $\partial^2 \cH/\partial Q\partial P=0$  vanish.

\begin{lem}[Evolution of the dependence on the initial configuration]\label{lem:evol d QP d qp}
  Under the assumption that  $\cH(Q,P)=\bbU(Q)+\bbV(P)$
  the  derivative of the motion $(Q,P)$ with respect to the starting configuration $(q,p)$ satisfy the following time evolution equation:
  \begin{equation}\label{eqn: ODE dQP dqp}
  \frac{\partial}{\partial t}
  \left(
    \begin{array}{cc}
     \frac{\partial Q}{\partial q} & \frac{\partial Q}{\partial p} \\
     \frac{\partial P}{\partial q} & \frac{\partial P}{\partial p}
    \end{array}
  \right)
  =
  \left(
  \begin{array}{cc}
    0& \bbV'' \\
    -\bbU''  & 0
  \end{array}
  \right)
  \cdot
  \left(
    \begin{array}{cc}
     \frac{\partial Q}{\partial q} & \frac{\partial Q}{\partial p} \\
     \frac{\partial P}{\partial q} & \frac{\partial P}{\partial p}
    \end{array}
  \right);
\qquad
  \left(
    \begin{array}{cc}
     \frac{\partial Q}{\partial q} & \frac{\partial Q}{\partial p} \\
     \frac{\partial P}{\partial q} & \frac{\partial P}{\partial p}
    \end{array}
  \right)_{t=0}=
 \left(
  \begin{array}{cc}
    I & 0 \\
    0 & I
  \end{array}
  \right)
 \end{equation}
where $\bbU''=\bbU''(Q)={\partial^2 \cH}/{\partial Q^2}$ and
$\bbV''=\bbV''(P)={\partial^2 \cH}/{\partial P^2}$.

\end{lem}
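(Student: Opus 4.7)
The plan is straightforward: differentiate the Hamiltonian equations~\eqref{eqn: d QP dt} with respect to the initial data $q$ and $p$, and interchange the $t$-derivative with the initial-data derivatives. Under the standing smoothness assumptions on $\bbU$ and $\bbV$, the variational equation for the flow $H_t$ exists classically, so Schwarz's theorem justifies the interchange; no regularization is needed.

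First I specialize~\eqref{eqn: d QP dt} to the separable Hamiltonian, where it becomes $\dot Q = \bbV'(P)$ and $\dot P = -\bbU'(Q)$. Differentiating the first identity with respect to $q$ and applying the chain rule, while noting that $q$ does not appear explicitly on the right, gives $\partial_t(\partial Q/\partial q) = \bbV''(P)\cdot(\partial P/\partial q)$, and the analogous manipulation with $p$ in place of $q$ gives $\partial_t(\partial Q/\partial p) = \bbV''(P)\cdot(\partial P/\partial p)$. Differentiating the second identity with respect to $q$ and to $p$ in turn gives $\partial_t(\partial P/\partial q) = -\bbU''(Q)\cdot(\partial Q/\partial q)$ and $\partial_t(\partial P/\partial p) = -\bbU''(Q)\cdot(\partial Q/\partial p)$. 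Packaging these four block identities (each entry being a $d\times d$ matrix when $\bbQ=\bbP=\Real^d$) into a single $2\times 2$ block matrix equation, with the Jacobian of the flow as the unknown and the antidiagonal coefficient matrix on the left, reproduces exactly~\eqref{eqn: ODE dQP dqp}. The initial condition is immediate from $H_0=\mathrm{id}$: the map $(q,p)\mapsto(q,p)$ has the identity as its Jacobian, which yields $I$ on the diagonal blocks and $0$ off the diagonal.

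No step in the above is genuinely difficult; the only substantive observation is structural. The reason the coefficient matrix has zeros on the \emph{diagonal} blocks (rather than just a generic $2\times 2$ block shape) is precisely the separability assumption $\cH(Q,P)=\bbU(Q)+\bbV(P)$, which forces the mixed second derivative $\partial^2\cH/(\partial Q\,\partial P)$ to vanish. Were one to allow a position-dependent auxiliary density $\bbg(p\,|\,q)$, as flagged in the earlier remark, the coefficient matrix would acquire additional entries on the main diagonal and the clean antidiagonal structure would be lost. This clean form is exactly the simplification that the later analysis of the Jacobians $\cD_q$ and $\cD_p$, and hence the kernel norm $\|K\|_\bbt$, will rely on.
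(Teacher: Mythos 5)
Your proof is correct and follows essentially the same route as the paper: differentiate the Hamiltonian ODEs with respect to the initial data, apply the chain rule, use the separability $\partial^2\cH/\partial Q\,\partial P=0$ to eliminate the cross term, and invoke equality of mixed partials (Schwarz) to swap the $t$- and $(q,p)$-derivatives, with the initial condition coming from $H_0=\mathrm{id}$. The only cosmetic difference is that you specialize to $\dot Q=\bbV'(P)$, $\dot P=-\bbU'(Q)$ before differentiating, whereas the paper differentiates the generic $\partial\cH/\partial P$ and notes the vanishing mixed term explicitly; these are the same observation.
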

\begin{proof}
The initial condition $\partial (Q,P)/\partial (q,p)_{t=0}=I$ can be calculated from~\eqref{eqn: int QP dt}, assuming sufficient continuity one can change the order of derivative and integration.
Then as $t\to 0$ only the first term becomes relevant.
As for the equation, we first calculate the derivatives with respect to $q$:
\[
\frac{\partial}{\partial q}\frac{\partial \cH}{\partial P} =
\frac{\partial^2\cH}{\partial Q \partial P}\cdot\frac{\partial Q}{\partial q} +\frac{\partial^2\cH}{\partial P^2}\cdot\frac{\partial P}{\partial q} =
0\cdot\frac{\partial Q}{\partial q} +\bbV''\cdot\frac{\partial P}{\partial q}
=\bbV''\frac{\partial P}{\partial q}\,.
\]
Similarly $\partial^2 \cH/\partial q\partial Q=\bbU''\cdot\partial Q/\partial q$. It is clear that the calculation holds after exchanging every $q$ by $p$.
Now we differentiate the Hamiltonian equations~\eqref{eqn: d QP dt} with respect to initial configuration $(q,p)$ and change the order of differentiation, for example
\[
\frac{\partial}{\partial t}\left(\frac{\partial Q}{\partial q}\right)=
\frac{\partial}{\partial q}\left(\frac{\partial Q}{\partial t}\right)=
\frac{\partial }{\partial q}\frac{\partial \cH} {\partial P}
=\bbV''\frac{\partial P}{\partial q}\,,
\]
and again similarly $\partial^2 P /\partial t \partial q =-\bbU''\cdot\partial Q/\partial q$. The calculation holds when exchanging $q$ for $p$.
\end{proof}
Before we proceed with the proof we remind that for $U,V$ symmetric, positive definite operators on $\Lt$ their symmetric positive definite square roots are uniquely defined. For example, for $V<I$ (that is $I-V$ is positive definite, which can be achieved by a normalization trick) $\sqrt{V}=I-R$ where $R$ is a limit of the (strongly converging) sequence $R_{n+1}=(I-(V-R_n^2))/2$, $R_0=0$. Also $VU$ and $UV$ are positive definite (but not necessarily symmetric, when non commuting), as for example $VU=\sqrt{U}^{-1}(\sqrt{U}V\sqrt{U})\sqrt{U}$ is similar via a symmetric operator $\sqrt{U}$ to a symmetric positive definite $\sqrt{U}V\sqrt{U}$. Using this we can define $\sqrt{VU}=\sqrt{U}^{-1}\sqrt{\sqrt{U}V\sqrt{U}}\sqrt{U}$, and similarly $\sqrt{UV}$.

Below the functions are defined by their power series
$\exp(x)=\sum_{n=0}^\infty x^n/n!$, $\sin(x)=\sum_{n=0}^\infty (-1)^n x^{2n+1}/(2n+1)!$,
  $\sinc(x)=x^{-1}\sin(x)=\sum_{n=0}^\infty (-1)^n x^{2n}/(2n+1)!$, which is well defined even when $x^{-1}$ is not
  and $\cos(x)=\sum_{n=0}^\infty (-1)^n x^{2n}/{(2n)!}$.

\begin{lem}[Exponential function of a matrix]
  \label{lem:exp B}
Let  $V,U$ be  symmetric, positive definite linear operators in $\Lt$.
If for $t\in\Real$
\[
\cC=\left(
  \begin{array}{cc}
    0 & tV \\
    -tU & 0
  \end{array}
  \right)
\]
then for $A=\sqrt{VU}$ and $B=\sqrt{UV}$ we have
\[
  \exp(\cC)=\sum_{n=0}^\infty(-1)^n\cC^n=
  \left(
   \begin{array}{cc}
    \cos(tA) & tV\,\sinc (tB) \\
    -tU\,\sinc(tA) & \cos(tB)
  \end{array}
  \right)\,.
    \]
\end{lem}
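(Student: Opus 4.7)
I would build the matrix exponential by explicit power counting. The decisive step is to square $\cC$:
\[
\cC^2=\begin{pmatrix}-t^2 VU & 0\\ 0 & -t^2 UV\end{pmatrix}=\begin{pmatrix}-t^2 A^2 & 0\\ 0 & -t^2 B^2\end{pmatrix},
\]
so $\cC^2$ is block diagonal with $A^2=VU$ in the top left and $B^2=UV$ in the bottom right. This is also the point where the two different ``square roots'' enter the picture. Observe however that $\cos$ and $\sinc$ in the statement are both defined by even power series, so the right-hand side really depends only on $A^2$ and $B^2$; the square-root construction recalled just before the lemma is needed only to make the notation meaningful, not to execute the computation.

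A straightforward induction then gives, for every $n\ge 0$,
\[
\cC^{2n}=\begin{pmatrix}(-1)^n t^{2n}A^{2n} & 0\\ 0 & (-1)^n t^{2n}B^{2n}\end{pmatrix},\qquad
\cC^{2n+1}=\cC\cdot\cC^{2n}=\begin{pmatrix}0 & (-1)^n t^{2n+1}V B^{2n}\\ (-1)^{n+1}t^{2n+1}U A^{2n} & 0\end{pmatrix},
\]
where $A^{2n}=(VU)^n$ and $B^{2n}=(UV)^n$. The even step is immediate from $\cC^{2(n+1)}=\cC^{2n}\cdot\cC^2$, and the odd step from $\cC^{2n+1}=\cC\cdot\cC^{2n}$. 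The asymmetric position of $V$ and $U$ in the odd powers (each sitting to the left of a power of the \emph{other} operator) is the intertwining $V(UV)^n=(VU)^n V$ in action, and it is exactly what makes the two ``different'' square roots reconcile in the answer.

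Finally, substituting into $\exp(\cC)=\sum_{n\ge 0}\cC^n/n!$ and separating even and odd indices gives, block by block,
\[
\sum_{n\ge0}\frac{(-1)^n (tA)^{2n}}{(2n)!}=\cos(tA),\qquad tV\sum_{n\ge0}\frac{(-1)^n (tB)^{2n}}{(2n+1)!}=tV\,\sinc(tB),
\]
and analogously $\cos(tB)$ and $-tU\,\sinc(tA)$ for the two other entries, matching the claim exactly. The only obstacle is really bookkeeping: the non-commutativity of $V$ and $U$ forces one to track the left/right placement of operators throughout the induction, but all series converge in operator norm because $U$ and $V$ are bounded, so no deeper functional-calculus machinery beyond manipulation of absolutely convergent series is needed.
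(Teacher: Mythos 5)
Your proposal is correct and is essentially the paper's own argument: compute $\cC^2$ to see that $\cC^{2n}$ is block diagonal with $(-1)^n(tA)^{2n}$ and $(-1)^n(tB)^{2n}$ on the diagonal and $\cC^{2n+1}$ off-diagonal with $(-1)^n t^{2n+1}VB^{2n}$ and $(-1)^{n+1}t^{2n+1}UA^{2n}$, then split the exponential series into even and odd parts to read off $\cos$ and $\sinc$. The only cosmetic difference is that you make the induction step explicit and emphasize the intertwining $V(UV)^n=(VU)^nV$, whereas the paper silently uses the identical rewriting $V(UV)^n=tV(tB)^{-1}(tB)^{2n+1}/t^{2n+1}$ to align with the $\sinc$ series.
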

\begin{proof}

  From direct calculation of $\cC^2$ we have the following powers of $\cC$:
 \begin{align*}
\cC^{2n}&=(-1)^n \left(\begin{array}{cc}
    (VU)^nt^{2n}& 0 \\
    0 & (UV)^nt^{2n}
  \end{array}\right)=
(-1)^n \left(\begin{array}{cc}
    (At)^{2n}& 0 \\
    0 & (Bt)^{2n}
  \end{array}\right);
\\
\cC^{2n+1}&=(-1)^n \left(\begin{array}{cc}
    0&V(UV)^nt^{2n+1} \\
    -U(VU)^{n}t^{2n+1} & 0
  \end{array}\right)\\
  &=
 (-1)^n \left(\begin{array}{cc}
    0&tV(Bt)^{-1}(Bt)^{2n+1} \\
    -tU(At)^{-1}(At)^{2n+1} & 0
  \end{array}\right)\,\\
\end{align*}
so that:
\begin{align*}
  \exp(\cC)=\sum_{n=0}^\infty(-1)^n&
  \left(
  \begin{array}{cc}
    \frac{(tA)^{2n}}{(2n)!} & tV(tB)^{-1}\frac{(tB)^{2n+1}}{(2n+1)!}  \\
    -tU(tA)^{-1}\frac{(tA)^{2n+1}}{(2n+1)!}  & \frac{(tB)^{2n}}{(2n)!}
  \end{array}
  \right)
   \\
   &
   =
  \left(
   \begin{array}{cc}
    \cos(tA) & tV\,\sinc(tB) \\
    -tU\,\sinc(tA) & \cos(tB)
  \end{array}
  \right)
   \,.
\end{align*}
We need to use the additional $U$ and $V$ to compensate for odd powers on the off-diagonal.
\end{proof}
In the following Proposition let $\cH(Q,P)=\bbU(Q)+\bbV(P)$, and
\begin{align}\label{eqndef:U}
  U(t)&=\frac{1}{t}\int_0^t \frac{\partial^2 \cH(Q(s),P(s))}{\partial Q^2}\,ds=\frac{1}{t}\int_0^t \bbU''(Q(s))\,ds\\
  \label{eqndef:V}
  V(t)&=\frac{1}{t}\int_0^t \frac{\partial^2 \cH(Q(s),P(s))}{\partial P^2}\,ds=\frac{1}{t}\int_0^t \bbV''(P(s))\,ds\,.
\end{align}
\begin{prop}[Solution of the evolution of the dependence on initial conditions]
  \label{prop:d QP d qp}
  Assume that the target $\bbf$ and auxiliary $\bbg$  distributions are both  strictly log-concave. Then $\bbU''(s)$ and $\bbV''$ are symmetric positive definite, and so are $U(t)$ and $V(t)$. Let $A=A(t)=\sqrt{V(t)U(t)}$ and $B=B(t)=\sqrt{U(t)V(t)}$.
Then the solution of the evolution equation in Lemma~\ref{lem:evol d QP d qp}~\eqref{eqn: ODE dQP dqp} is given by:
\[
 \left(
    \begin{array}{cc}
     \frac{\partial Q}{\partial q} & \frac{\partial Q}{\partial p} \\
     \frac{\partial P}{\partial q} & \frac{\partial P}{\partial p}
    \end{array}
  \right)(t)=
  \left(
   \begin{array}{cc}
    \cos(t A) & t V(t)\,\sinc (t B) \\
    -tU(t)\,\sinc(t A) & \cos(t B)
  \end{array}
  \right)\,.
\]
\end{prop}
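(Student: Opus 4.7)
The plan is to recognize the proposed matrix as a direct application of Lemma~\ref{lem:exp B} to the integrated driving matrix of the ODE~\eqref{eqn: ODE dQP dqp}. Writing
\[
J(s) = \begin{pmatrix} 0 & \bbV''(P(s)) \\ -\bbU''(Q(s)) & 0 \end{pmatrix},
\]
the very definitions~\eqref{eqndef:U} and~\eqref{eqndef:V} give
\[
\cC(t) \;=\; \int_0^t J(s)\,ds \;=\; \begin{pmatrix} 0 & tV(t) \\ -tU(t) & 0 \end{pmatrix}.
\]
Uniform strict log-concavity makes each $\bbU''(Q(s))$ and $\bbV''(P(s))$ symmetric positive definite, and hence so are the time averages $U(t), V(t)$. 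The matrix $\cC(t)$ thus fits the template of Lemma~\ref{lem:exp B} exactly, and reading off its exponential with $A(t) = \sqrt{V(t)U(t)}$ and $B(t) = \sqrt{U(t)V(t)}$ reproduces the four trigonometric blocks of the proposition. The content of the statement is therefore $M(t) = \exp(\cC(t))$.

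With this identification, verification splits in two. The initial condition $M(0)=I$ is immediate: continuity of the Hessians along the trajectory gives $U(t) \to \bbU''(q)$ and $V(t) \to \bbV''(p)$ as $t\to 0^+$, so $tA(t), tB(t) \to 0$, and the power series collapse to $\cos(0) = I$ and $t\,\sinc(0) = 0$. For the ODE itself I would differentiate $M(t)$ using the block power series from Lemma~\ref{lem:exp B}, relying on the key identities $\frac{d}{dt}(tU(t)) = \bbU''(Q(t))$ and $\frac{d}{dt}(tV(t)) = \bbV''(P(t))$, which are immediate from the fundamental theorem of calculus applied to~\eqref{eqndef:U} and~\eqref{eqndef:V}. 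Organizing the expansion by the block-antidiagonal structure of $J(t)$ (whose even powers $\cC^{2n}$ are block-diagonal in $(VU)^n$ and $(UV)^n$ while odd powers sit off-diagonal), and using $A(t)^2 = V(t)U(t)$, $B(t)^2 = U(t)V(t)$, one matches $\dot M(t)$ against $J(t) M(t)$ block by block.

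The main obstacle is the non-commutativity intrinsic to this differentiation: $\cC(t)$ and $\dot\cC(t) = J(t)$ in general fail to commute, and the averaged Hessians $U(t), V(t)$ fail to commute with their instantaneous values $\bbU''(Q(t)), \bbV''(P(t))$. Consequently the naive identity $\frac{d}{dt}\exp(\cC(t)) = \dot\cC(t)\,\exp(\cC(t))$ is not literally valid and commutator corrections must be absorbed at each order of the power expansion; handling these carefully, with the parity bookkeeping and the square-root conventions fixed just before Lemma~\ref{lem:exp B}, is the delicate part of the argument. For the downstream application in Theorem~\ref{thm:Hamil geom conv} what actually matters is the explicit form of the off-diagonal blocks: $\partial Q/\partial p = tV(t)\sinc(tB(t))$ and $\partial P/\partial q = -tU(t)\sinc(tA(t))$ give the Jacobians $\cD_q(Q)$ and $\cD_p(P)$, and the uniform spectral bounds on $U(t), V(t)$ coming from log-concavity plug directly into the compactness criterion of Corollary~\ref{cor:K compact if det bounded}.
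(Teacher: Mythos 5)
Your first half is exactly the paper's proof: the paper invokes the formula $X(t)=\exp\bigl(\int_0^t J(s)\,ds\bigr)X(0)$ for the fundamental solution of $\dot X=J(t)X$, identifies the integral with $\cC(t)=\bigl(\begin{smallmatrix}0 & tV(t)\\ -tU(t)&0\end{smallmatrix}\bigr)$, and reads off the blocks from Lemma~\ref{lem:exp B}. So you have reconstructed the paper's approach, not merely a route toward it.

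Where you genuinely go beyond the paper is in flagging the non-commutativity obstruction, and here you have put your finger on a real gap --- but the paper's, not just your own. The identity $X(t)=\exp\bigl(\int_0^t J(s)\,ds\bigr)X(0)$ is false for a time-dependent linear system unless $[J(s),J(t)]=0$ for all $s,t$. In the present setting a short block computation gives $[J(s),J(t)]=\mathrm{diag}\bigl(V_tU_s-V_sU_t,\;U_tV_s-U_sV_t\bigr)$ with $U_s=\bbU''(Q(s))$, $V_s=\bbV''(P(s))$, which vanishes only under an extra commutativity hypothesis (satisfied, e.g., in dimension $d=1$ or when $\bbf$ and $\bbg$ are Gaussian so that $\bbU'',\bbV''$ are constant). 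The paper applies the false formula without comment. Your suggestion that the commutator corrections ``must be absorbed at each order of the power expansion'' is where your proposal, too, falls short: the actual solution is the time-ordered (Dyson) exponential, and its disagreement with $\exp(\cC(t))$ is precisely these correction terms; they do not cancel, so they cannot be absorbed while retaining the closed form asserted in the Proposition. As stated, Proposition~\ref{prop:d QP d qp} is correct for $d=1$ or for Gaussian $\bbf$, $\bbg$, but not in general. (For the downstream Theorem~\ref{thm:Hamil  geom conv} what is actually used is a uniform two-sided spectral bound on $\partial Q/\partial p$ and $\partial P/\partial q$ for small $t$; such bounds can be salvaged by a Gr\"onwall/comparison argument directly from the ODE~\eqref{eqn: ODE dQP dqp} without any closed form, but that is a different proof from the one the paper offers.)
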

\begin{proof}
  The solution to a linear differential equation $\dot{X}=A(t) X$ is equal to $X(t)=\exp(\int_{s=0}^t A(s)\,ds)\cdot X(0)$
  and we use Lemma~\ref{lem:exp B}.
\end{proof}
\subsection*{Proof of Theorem~\ref{thm:Hamil  geom conv}}\label{subsec:Proof Thm geom conv}
\begin{proof}
It is enough to prove that  the Jacobians $\cD_q(Q)$ and $\cD_p(P)$ are uniformly bounded away from $0$ and $\infty$, and we can use Corollary~\ref{cor:K compact if det bounded} which says that
  the kernel $K$ defined in~\eqref{eqn:def K} is then bounded in $\Lt$ which makes the operator $\cT$ compact, which yields to the spectral gap. Then, as the (maximal) eigenvalue $1$ has multiplicity
  $1$, the spectrum $\sigma$ of the operator $\cT_N$ on the closed hyperplane $N=\{h:\int h=0\}$
  orthogonal to the eigen direction of fixed points $\{\bbf\cdot\Real\}$, which is contained inside the open unit disk
  $\spectrum(\cT_N)\subset \{\mu\in \Complex:|\mu|<1\}$ (in fact in the interval $[0,1)$ as $\cT$ is positive and in our case selfadjoint). The spectrum has only 0 as the possible accumulation point. We get $\sup |\spectrum(\cT_N)|<1$, the radius is  in fact the second largest eigenvalue, which secures the geometrical convergence to 0 on $N$ and to $\bbf\cdot(\int h/\int\bbf)$ in $\Lt$.

We deal with  finite dimensional $\bbQ$ and $\bbP$, both modeled by $\Real^d$. The operators  $\partial Q/\partial p $ and $\partial P/\partial q$
are both symmetric $d\times d$ matrices with the determinant equal to the product of their real eigenvalues.
For a (positive symmetric) matrix $M$ and a (positive) function $\phi$ defined by the power series the (real positive) eigenvalues of the (positive symmetric) matrix $\phi(M)$ are equal to the images under $\phi$ of the (real positive) eigenvalues of $M$.

By Proposition~\ref{prop:d QP d qp}, $\partial Q/\partial p=tV\sinc(t\sqrt{UV})$, where $U=U(t)$ and $V=V(t)$ were defined in~\eqref{eqndef:U} and~\eqref{eqndef:V}.
By the assumption on uniform strict concavity the spectra $\spectrum(\bbU'')$ and $\spectrum(\bbV'')$ are  uniformly bounded away from 0 and infinity by  $0<\lambda=\inf(\spectrum(\bbU''),\spectrum(\bbV''))\le \sup(\spectrum(\bbU''),\spectrum(\bbV''))=\Lambda<\infty$ so do the spectra of running averages $U$ and $V$ and also the spectra of $A=\sqrt{VU}$ and $B=\sqrt{UV}$ as in ~Proposition~\ref{prop:d QP d qp}.
Consequently, for small $t$, such that $0<t\Lambda<\pi/2$,  $\spectrum(\partial Q)/\partial p)$ is bounded,
from above by $t\Lambda$ and from below by $t\sinc(t\lambda)$.
Similarly $\spectrum(\partial P/\partial q)$ is bounded from below by $-t\Lambda$ and from above by $-t\sinc(t\lambda)$.
All these bounds are uniformly away from $0$ and $\infty$. Finally for the product of determinants $\cD_q(Q)\cdot\cD_p(P)$ defined by~\eqref{eqndef: D} have uniform bounds away from $0$ and $\infty$ by $(t\,\sinc(t\lambda))^{-2d}$ from above and by $ (t\Lambda)^{-2d}$ from below.
 \end{proof}
\bibliographystyle{abbrv}

\begin{thebibliography}{1}
  \bibitem{1}
   Duane, Simon; Kennedy, Anthony D.; Pendleton, Brian J.; Roweth, Duncan (3 September 1987). \emph{Hybrid Monte Carlo}. Physics Letters B. 195 (2): 216–222.
  \bibitem{2}
  Livingstone, Samuel; Betancourt, Michael; Byrne, Simon; Girolami, Mark. emph{On the geometric ergodicity of Hamiltonian Monte Carlo}. Bernoulli 25 (2019), no. 4A, 3109--3138.
  \bibitem{3}
  S.~Ghosh, Y.~Lu, T.~Nowicki \emph{HMC, an example of Functional Analysis applied to Algorithms in Data Mining. The convergence in $L^p$} \texttt{https://arxiv.org/abs/2101.08688}
    \bibitem{4}
  S.~Ghosh, Y.~Lu, T.~Nowicki \emph{HMC, an  Algorithms in Data Mining, the Probabilistic approach.} in preparation.
  \bibitem{5}
  S.~Ghosh, Y.~Lu, T.~Nowicki \emph{On Convergence of Hamiltonian Monte Carlo with Asymmetrical Momentum Distributions.} preprint
\end{thebibliography}

\end{document}